\documentclass[a4paper,USenglish]{lipics-v2021}
\pdfoutput=1
\nolinenumbers
\hideLIPIcs
\usepackage{amsthm,amsmath,bm,mathtools,makecell,threeparttable,tabularx,aliascnt,pifont,booktabs}

\usepackage[noend]{algpseudocode}
\usepackage[many]{tcolorbox}
\newenvironment{algobox}[2][]{
    \begin{center}\begin{tcolorbox}[enhanced,title=\centering \large {#2},colback=white,colframe=black!80,width=\columnwidth,breakable]
}{
    \end{tcolorbox}\end{center}
}

\makeatletter \newcommand{\apxthm}[3]{
\newcounter{sec#2} \newcounter{val#2} \newcounter{use#2}
\setcounter{sec#2}{\value{section}} \setcounter{val#2}{\value{theorem}}
\begin{#1} \label{#2} {#3} \end{#1}
\long\@namedef{#2}{
\gdef\thesection{\@arabic\c@section}
\@tempcnta=\value{section} \@tempcntb=\value{theorem}
\setcounter{section}{\value{sec#2}} \setcounter{theorem}{\value{val#2}}
\phantomsection \label{#2*} \begin{#1} {#3} \end{#1} 
\setcounter{section}{\@tempcnta} \setcounter{theorem}{\@tempcntb}
\gdef\thesection{\@Alph\c@section}
}}\makeatother

\algblock{Upon}{EndUpon} \algrenewtext{Upon}[1]{\textbf{upon} {#1} \textbf{do}}
\makeatletter \ifthenelse{\equal{\ALG@noend}{t}} {\algtext*{EndUpon}} \makeatother

\DeclarePairedDelimiter{\floor}{\lfloor}{\rfloor}
\DeclarePairedDelimiter{\ceil}{\lceil}{\rceil}

\newcommand{\llabel}[1]{\phantomsection\label{line:#1}}

\newcommand{\Pub}{\mathit{Publish}}
\newcommand{\C}{\mathrm{C}}
\newcommand{\str}{\textsf{str}}

\newaliascnt{fullthm}{theorem}

\aliascntresetthe{fullthm}

\title{Subcubic Coin Tossing in Asynchrony without PKI}

\author{Mose Mizrahi}{ETH Zurich, Switzerland}{mmizrahi@ethz.ch}{https://orcid.org/0009-0009-9771-0845}{}
\author{Roger Wattenhofer}{ETH Zurich, Switzerland}{wattenhofer@ethz.ch}{https://orcid.org/0000-0002-6339-3134}{}
\authorrunning{M.\ Mizrahi and R.\ Wattenhofer}
\Copyright{Mose Mizrahi Erbes, and Roger Wattenhofer}
\ccsdesc[500]{Theory of computation~Communication complexity}
\ccsdesc[500]{Theory of computation~Distributed algorithms}
\ccsdesc[500]{Theory of computation~Cryptographic protocols}
\ccsdesc[500]{Security and privacy~Distributed systems security}
\keywords{Communication Complexity, Lower Bounds, Consensus, Byzantine Agreement, Reliable Broadcast, General Adversary Structures, Byzantine Faults, Send-Omission Faults}
\keywords{Byzantine Agreement, Common Coin, Leader Election, Adaptive Adversary}
\acknowledgements{We thank Ittai Abraham and Gilad Asharov for assisting us regarding the AVSS protocol in \cite{aaps24}.}

\begin{document}

\maketitle

\begin{abstract}
We consider an asynchronous network of $n$ parties connected to each other via secure channels, up to $t$ of which are byzantine. We study common coin tossing, a task where the parties try to agree on an unpredictable random value, with some chance of failure due to the byzantine parties' influence. Coin tossing is a well-known and often-studied task due to its use in byzantine agreement.

In this work, we present a committee-based method to transform strong (rarely failing) binary common coins into weaker ones that asymptotically require less communication. For any $k > 2$ and $\varepsilon > 0$, we can transform a strong binary coin that costs $\widetilde{O}(n^k)$ bits of communication into a weak binary coin that costs $\widetilde{O}(\varepsilon^{-2k}n^{3 - 2/k})$ bits. This latter coin tolerates $\varepsilon n$ fewer byzantine parties than the strong coin it is based on, and it fails with an arbitrarily small constant probability.

With our method, we obtain a secure-channel-based perfectly secure coin for $t \leq (\frac{1}{4} - \varepsilon)n$ faults that costs $\widetilde{O}(n^{2.5})$ bits, as well as a coin based on cryptographic hashing for $t \leq (\frac{1}{3} - \varepsilon)n$ faults that costs $\widetilde{O}(n^{7/3}\kappa)$ bits. These are to our knowledge the first PKI-free asynchronous common coins that cost $o(n^3)$ bits of communication but still succeed with at least constant probability against $t = \Theta(n)$ \linebreak adaptive byzantine faults.
\end{abstract}

\section{Introduction}

In byzantine agreement (BA) \cite{psl80, lsp82}, there is a network of $n$ parties with inputs who want to agree on an output. If the parties have a common input, then they must agree on it. The challenge is that an adversary corrupts up to $t$ of the $n$ parties, rendering them byzantine, and the byzantine parties arbitrarily deviate from the BA protocol that the parties run.

BA is particularly challenging in asynchronous networks, where the adversary can delay any message as much as it wants. By the FLP theorem \cite{flp85}, deterministic BA is impossible in an asynchronous network even if just a single party can crash. To circumvent this impossibility result, asynchronous BA protocols use randomization. The foremost randomization technique, pioneered by Ben-Or \cite{benor83} and Rabin \cite{rab83}, is to use a \emph{common coin}. A common coin protocol allows the parties to agree on a random value which the adversary cannot predict in advance. We define common coins as follows:

\begin{definition}[Common Coin]
    A common coin $\C$ is a protocol which a set of parties without inputs run to get outputs in an output domain $D$. If $\C$ is $\delta$-fair, then it achieves the following: \begin{itemize}
        \item \textbf{liveness:} If the honest (never-corrupted) parties all run a common instance of $\C$, then they all obtain outputs in $D$ from the instance.
        \item \textbf{$\bm\delta$-fairness:} Each $\C$ instance is associated with two independent random variables sampled freshly for the instance: a uniformly random $v \in D$ and a bit $g$ such that $\Pr[g = 1] = \delta$. If $g = 1$, then the honest parties can only output $v$ from the instance, and the adversary does not know anything about $(v,g)$ until some honest party becomes active in the instance. \end{itemize}
\end{definition}
Note that one can further formalize this definition with an ideal common coin functionality, like \cite{cfgpz23} does. In some works (e.g.\ \cite{cfgpz23}), a common coin as we have defined is called \emph{oblivious} because the honest parties do not know if they have agreed on a random value or not.

Ben-Or's common coin is binary (produces outputs in $\{0,1\}$). It tolerates $t < \frac{n}{3}$ byzantine faults, the maximum number of faults any asynchronous BA protocol can tolerate \cite{t84,bracha84}. It is simple. It works as follows: Each party generates a random bit and broadcasts it. Then, each party waits to receive from $n - t$ parties, and after doing so it outputs the majority bit it received (breaking ties arbitrarily). This coin works because for any bit $b$, if every honest party randomly generates $b$, then $b$ ends up being the majority bit for everyone, and so every honest party outputs $b$. Moreover, before some honest party generates a bit, the adversary cannot with 100\% accuracy predict what the common output $b$ (if any) will be.

The drawback of Ben-Or's coin is its low fairness. For each bit $b$, the coin only guarantees the common output $b$ if at least $\ceil{\frac{n + t + 1}{2}}$ honest parties randomly generate $b$. When $t = \Theta(n)$, the probability of this is $2^{-\Theta(n)}$, which means that the coin is only $2^{-\Theta(n)}$-fair. Consequently, BA protocols that use this coin \cite{benor83,bracha84} have exponential-in-$n$ expected latencies. Note however that when $t = O(\sqrt{n})$, the probability that enough honest parties generate either bit becomes $\Omega(1)$, and the coin thus achieves constant fairness ($\delta$-fairness for a constant $\delta > 0$) \cite{benor83}.

The threshold $t = \widetilde{\Theta}(\sqrt{n})$ is known to be critical for \emph{adaptive} adversaries, which can corrupt parties during a protocol's execution based on the information they gather while the parties run the protocol. This is because adaptive adversaries that can corrupt $t = {\omega}(\sqrt{n} \cdot \mathrm{polylog}(n))$ parties make it impossible to toss common coins with constant fairness \cite{hk26}. Nevertheless, there exist common coins which scale better with $n$ when $t = \Theta(n)$ than Ben-Or's. In \cite{hsl24}, Huang, Pettie and Zhu present an $\Omega(1/\sqrt{n})$-fair asynchronous binary coin based on statistical fraud detection that can tolerate $t < \frac{n}{3 + \varepsilon}$ faults with $\widetilde{O}(n^{4}/\varepsilon^8)$ latency for any $\varepsilon > 0$.

The critical threshold $t = \widetilde{\Theta}(\sqrt{n})$ is critical only in the \emph{full-information} setting, where the adversary is computationally unbounded and omniscient. This makes the adversary very strong. For example, in a Ben-Or coin execution where a slim majority of the honest parties randomly generate some bit $b$, the full-information adversary can counteract this majority by making the byzantine parties broadcast $1-b$. The honest parties cannot prevent this by somehow obscuring their bits because the adversary knows everything.

Due to the difficulty of adaptive fault tolerance in the full-information setting, common coins often target more restricted adversaries. For example, one can target static adversaries that must choose the byzantine parties before protocol execution. Kapron et al.\ show in \cite{kapron10} that for every absolute constant $\varepsilon > 0$ there is a constant-fairness leader election protocol that tolerates up to $t \leq (\frac{1}{3} - \varepsilon)n$ static byzantine faults, with $\widetilde{O}(1)$ latency and $\widetilde{O}(n^2)$ bits of communication \cite{kapron10}. Note that a leader election protocol is a common coin which has the output domain $[n] = \{1,\dots,n\}$, where the coin's output determines a leader party in $[n]$.

In this work, we instead limit what the adversary can know. We assume that the parties are connected to each other via secure channels that ensure privacy (and that the adversary cannot see the honest parties' internal states). This means that the honest parties can share random secrets with each other, and then reveal the secrets when the adversary is too late to stage secret-dependent attacks. This popular strategy was pioneered by Canetti and Rabin, who used secure channels to get an efficient $\frac{1}{2}$-fair binary coin that tolerates $t < \frac{n}{3}$ faults \cite{cr93}.

We need to assume secure channels because they enable \emph{strong} common coins, which can achieve $\delta$-fairness for any $\delta \in (0, 1)$ with reasonable efficiency even when $1 - \delta$ is negligibly small. Informally, our work is about a method to transform any strong binary coin that costs $\widetilde{O}(n^k)$ bits of communication for some $k > 2$ into a weaker (less fair) but also cheaper binary coin that costs $\widetilde{O}(\varepsilon^{-2k}n^{3 - 2/k})$ bits of communication, while lowering the fault tolerance $\frac{t}{n}$ by only $\varepsilon > 0$. Our method allows us to obtain the first common coins (and thus BA protocols, by \cite{mmr15}) that tolerate $t = \Theta(n)$ adaptive byzantine faults with $o(n^3)$ bits of communication without PKI (public key infrastructure) setups. One of our coins will even be perfectly secure, solely reliant on a secure channel setup and no other assumptions.

\section{Contributions} \label{sec-contributions}

Our main contribution is a technique to decrease the asymptotic message/communication complexity of common coin protocols that cost $n^{2 + \Omega(1)}$ messages/bits. We do so by designing a transformation which converts strong but expensive binary common coins into weaker but also cheaper ones. Below, Theorem \ref{simplethm} states the relation between the strong coin $\C_\str$ the transformation takes in and the weaker coin $\mathit{T}(\C_\str,z,k,\varepsilon,\alpha)$ we thus obtain.

\begin{theorem}[Simplified]
\label{simplethm}
    For any positive real constants $z < 1$, $\alpha < \frac{1}{3}$, $\delta \leq 1$, $k \geq 2$ and any real parameter $0 < \varepsilon < \alpha$, suppose we have a $\delta$-fair binary common coin $\C_\str$ such that when $n$ parties run it, $\C_\str$ can tolerate up to $t < \alpha n$ adaptive byzantine faults, and it costs $\widetilde{O}(n^k)$ \linebreak messages of size at most $L(n)$ for a non-decreasing function $L$. Then, we can obtain a binary common coin $\mathit{T}(\C_\str,z,k,\varepsilon,\alpha)$ with the following properties: \begin{itemize}
        \item $\mathit{T}(\C_\str,z,k,\varepsilon,\alpha)$ tolerates up to $t \leq (\alpha - \varepsilon)n$ adaptive byzantine faults.
        \item $\mathit{T}(\C_\str,z,k,\varepsilon,\alpha)$ is $(\delta^q(1 - z))$-fair, where $q = 2\ceil{n^{2-2/k}} + 1$.
        \item $\mathit{T}(\C_\str,z,k,\varepsilon,\alpha)$ costs $\widetilde{O}(\varepsilon^{-2k}n^{3-2/k})$ messages of size at most $L(n) + O(\log n)$.
        \item If $\C_\str$'s latency is at most $R$ when $n$ or less parties run it, then $\mathit{T}(\C_\str,z,k,\varepsilon,\alpha)$'s latency is at most $R + 5$ when $n$ parties run it.
    \end{itemize}
\end{theorem}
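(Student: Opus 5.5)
We construct $\mathit{T}(\C_\str,z,k,\varepsilon,\alpha)$ as a committee-based majority of $q = 2\ceil{n^{2-2/k}}+1$ independent instances of $\C_\str$, each run by a small committee. Set the committee size to $m = \Theta(\varepsilon^{-2} n^{2/k}\log n)$. Committee $j \in \{1,\dots,q\}$ is a set $S_j \subseteq [n]$ of $m$ parties, and committee $S_j$ runs its own instance $\C_j$ of $\C_\str$. The members of $S_j$ then send their outputs from $\C_j$ to every party. Each party accepts as the value of instance $j$ the bit reported by more than $m/2$ members of $S_j$; if it waits for this, it must be able to terminate. Finally each party outputs the majority over $j$ of the accepted values.

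\textbf{Fault tolerance.} Against an adaptive adversary the committees cannot be secret and random, since the adversary could corrupt a committee once it becomes visible. We therefore fix the committees in advance with a combinatorial design. The required property is: for every set $B$ of at most $(\alpha-\varepsilon)n$ parties, all but a $z$-small fraction of committees satisfy $|S_j\cap B| < \alpha m$. Choosing the $S_j$ at random once (a public, deterministic design) and applying a Chernoff bound with a union bound over all $\binom{n}{\le t}$ sets $B$ shows such a design exists. For each fixed $B$, the number of bad committees is a sum of independent indicators, each with probability $e^{-\Omega(\varepsilon^2 m)}$. The union bound over $2^n$ sets requires $q\cdot e^{-\Omega(\varepsilon^2 m)}$ small with exponent beating $n$. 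This balance, the requirement $q\,\varepsilon^2 m \gtrsim n$ up to logs, is what forces $q \approx n^{2-2/k}$ against $m\approx n^{2/k}$.

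I expect the main obstacle to be making the adaptive-corruption argument work when $q\cdot m \approx n^2$, so each party sits in about $n$ committees. To get enough good committees with only a constant failure allowance $z$, I would instead use a weaker requirement. At most a $1/2-\Theta(1)$ fraction of committees may be bad (or at most $O(\sqrt q)$ beyond the majority margin), and the fairness budget $z$ absorbs the rest.

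\textbf{Fairness.} With probability $\delta^q$ every good committee's coin is fair, meaning its $g_j=1$; bad committees' outputs we treat as adversarial. The good committees' $v_j$ are independent uniform bits, hidden until an honest member activates. A Ben-Or style majority argument then applies: with $q$ good coins, the majority $b$ of the good bits wins by margin $\Omega(\sqrt q)$ with probability $1-z$, provided the number of bad committees is $O(\sqrt q)$. That gives $\delta^q(1-z)$.

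This is exactly where the design must guarantee at most $c\sqrt q \approx n^{1-1/k}$ bad committees. The union bound then needs $e^{-\Omega(\varepsilon^2 m\cdot \sqrt q)}\binom{n}{t} < 1$, roughly $\varepsilon^2 m\, n^{1-1/k} \gtrsim n$, i.e. $m \approx \varepsilon^{-2}n^{1/k}$. I would verify the exact exponents here first.

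Adaptivity is handled because the design property holds for every $B$, including the final corrupted set. Also, because honest parties only activate their instance when starting, the adversary learns nothing about the $v_j$ of good committees beforehand. Ties among all parties' outputs are resolved because honest parties see identical accepted values on good committees, via the $>m/2$ honest-majority reports.

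\textbf{Cost and latency.} Cost is $q\cdot\widetilde O(m^k)$ for the instances plus $q\cdot m\cdot n$ for the reports. With $m=\widetilde O(\varepsilon^{-2}n^{1/k})$ this is $\widetilde O(\varepsilon^{-2k} n^{3-2/k})$, and each message is a $\C_\str$ message plus an $O(\log n)$ instance tag. For latency, the committees run $\C_\str$ in parallel with latency $R$ since $m\le n$. Adding a constant number of rounds for reporting and echo/amplification, so that all honest parties, including those in bad committees, terminate, gives at most $R+5$.
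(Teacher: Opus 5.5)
\textbf{Gap in the cost bound.} Your committee design and your fairness argument are essentially the paper's Lemma~\ref{comlemma} and Lemma~\ref{hgood-lemma}: you allow $O(\sqrt q)$ bad committees, which forces $m \approx \varepsilon^{-2}(n/\sqrt q + \log q)$, i.e.\ $m = \widetilde{\Theta}(\varepsilon^{-2}n^{1/k})$ for constant $z$. Your opening choice $m = \Theta(\varepsilon^{-2}n^{2/k}\log n)$ would make the $\C_\str$ instances alone cost about $n^{4-2/k}$, but the value you derive later is the right one. The problem is the cost of reporting. Having all $m$ members of each of the $q$ committees report to all $n$ parties costs $qmn = \widetilde{\Theta}(\varepsilon^{-2}n^{3-1/k})$, not $\widetilde{O}(\varepsilon^{-2k}n^{3-2/k})$; for fixed $\varepsilon$ this is too large by a factor $n^{1/k}$. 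Rebalancing does not rescue full reporting either. Minimizing $qm^k + qmn$ subject to $qm^2 = \Omega(n^2)$ gives $m \approx n^{1/(k-1)}$ and total cost $\widetilde{\Theta}(n^{3-1/(k-1)})$, which is worse than $n^{3-2/k}$ for every $k>2$; the paper remarks on exactly this.

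The missing idea is \emph{sparse, imperfect} publishing. Each committee $Q$ sends its bit along a fixed bipartite graph in which every party hears from only $\Delta = O(s/d + \log n)$ members of $Q$ (Lemma~\ref{sparselemma}). Whenever fewer than $s/3$ members of $Q$ are corrupted, all but fewer than $d$ parties then have more than $\Delta/2$ honest senders. With $d = \Theta(n/\sqrt q)$ this costs $O(s^2 + n(s/d + \log n))$ per committee, i.e.\ $O(q^{3/2}s + qs^2 + qn\log n) = \widetilde{O}(\varepsilon^{-4}n^{3-2/k})$ in total. Because some honest parties now mishear committees, you also need two further pieces. First, the averaging argument of Lemma~\ref{hsize-lemma}: more than $2n/3$ honest parties mishear fewer than $z'\sqrt q/12$ committees. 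Second, a final all-to-all round in which each party broadcasts its majority bit and outputs the majority of the first $\floor{2n/3}+1$ bits it receives. That round also accounts for the last unit of the $R+5$ latency.

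\textbf{Gap in liveness.} When $g_j = 0$, the honest members of even a good committee may output different bits from $\C_j$. Then no bit need ever be reported by more than $m/2$ members, and a party waiting for that blocks forever. A party also cannot wait for all $q$ committees, since bad ones may stay silent, and you never say after how many accepted values it decides. Liveness must hold unconditionally, not only when every $g_j = 1$. The paper handles the first issue by having each committee first run crusader agreement on its $\C_\str$ outputs. Honest members then report values in $\{b^*,\bot\}$, and a receiver counts $\bot$ toward either bit. This step is essential under sparse reporting, where a receiver is only guaranteed fewer than $\Delta/2$ corrupt senders. The paper handles the second issue by having each party act once $q - \floor{5z'\sqrt q/12}$ publishing instances have output. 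The margin argument in Lemma~\ref{hgood-lemma} shows this threshold is still safe.
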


To keep the exposition light, we only state a simple version of Theorem \ref{simplethm} here. In the full version in Section \ref{sectrans}, we treat $z,\alpha,\delta,k$ as variable parameters, we drop the assumption that $\C_\str$'s message complexity is $\widetilde{O}(n^k)$, we let $L(n)$ be an arbitrary function, and we state $\mathit{T}(\C_\str,z,k,\varepsilon,\alpha)$'s complexity more exactly.

Note that the latency bounds in Theorem \ref{simplethm} only hold when the a tolerable fraction of the parties are corrupted: less than $\alpha$ for $\C_\str$, at most $\alpha - \varepsilon$ for $\mathit{T}(\C_\str,z,k,\varepsilon,\alpha)$. However, when we mention a message/communication complexity bound in the theorem, we mean that this bound holds no matter how many parties are corrupted, for both $\C_\str$ and $\mathit{T}(\C_\str,z,k,\varepsilon,\alpha)$.

Theorem \ref{simplethm} holds when the parties are connected via reliable and authenticated channels, even in the full-information setting. However, $\C_\str$ may require more security assumptions (such as secure channels), and if so, then $\mathit{T}(\C_\str,z,k,\varepsilon,\alpha)$ inherits these assumptions.

\subparagraph{Our method.} We prove Theorem \ref{simplethm} with a committee-based transformation. Suppose for example that we have a strong binary coin $\C_\str$ that tolerates $t < n/4$ faults and costs $O(n^k)$ words (messages of size $O(\log n)$) for some $k > 2$, and we want to get a cheaper binary coin that tolerates $t < n/5$ faults. Then, we assort the $n$ parties into $q = \Theta(n^{2-2/k})$ deterministically fixed committees (party subsets) of size $s = \Theta(n^{1/k} + \log n)$, with the safety guarantee that no matter which $t < n/5$ parties the adversary adaptively corrupts, only $O(\sqrt{q})$ committees can become bad (start containing $\frac{s}{4}$ or more corrupted parties). Each committee internally generates a random bit by running $\C_\str$, and then publishes its bit $b$ to make $b$ known to the parties outside. Finally, each party outputs the bit that it thinks a majority of the committees published. Essentially, we mimic a $q$-party instance of Ben-Or's common coin by making each committee act as a virtual party, a virtual instance with at least constant fairness as only \linebreak $O(\sqrt{q})$ of the virtual parties (committees) are corrupt (bad). 

The steps above cost $O(qs^k + qs^2 + qn\log n + n^2) = O(n^{3 - 2/k}\log n)$ words in total. The $qs^k$ term arises from each committee internally generating a bit with $O(s^k)$ words by running $\C_\str$, whereas the $qs^2$ and $qn\log n$ terms arise from each committee publishing its bit with a protocol where the committee first reaches crusader agreement\footnote{Crusader agreement \cite{d82, crusader-modern, aw24} is a relaxation of BA. We use it to prevent our construction from losing liveness when the honest parties in a good (non-bad) committee do not agree on which bit they generated. Such a disagreement might occur due to the committee not using a 1-fair strong coin.} on its bit with $O(s^2)$ words and then sends $O(n\log n)$ words in total to the parties outside the committee. Finally, the $n^2$ term is due to a final step where the honest parties who have seen enough committees' bits each broadcast their majority bits so that everyone can output.

While we were not aware of this when we thought of it, our core idea of using committees as virtual parties in a virtual execution of Ben-Or's coin dates back to Bracha's work from the 1980s \cite{bracha87}. Bracha used $\Theta(n^2)$ committees of size $\Theta(\log n)$ to solve randomized synchronous byzantine broadcast in $O(\log n)$ expected rounds, with $\Theta(n^4\log^2 n)$ messages per round.

Our insight that allows us to improve on Bracha's work is that his virtual party simulation is too faithful. In \cite{bracha87}, the virtual parties (committees) run Ben-Or's coin as follows: \begin{enumerate}
    \item Each virtual party internally generates a random bit via a strong common coin protocol.
    \item For every pair of virtual parties $Q$ and $Q'$, the virtual party $Q$ sends $Q'$ the bit $b$ that $Q$ generated. This involves each real party in $Q$ sending $b$ to each real party in $Q'$, and then \linebreak the real parties in $Q'$ internally running a BA protocol to agree on $b$.
\end{enumerate}

In the second step, the virtual parties simulate all the message sending/receiving operations of Ben-Or's coin, and thus simulate a Ben-Or coin execution. In this simulation, each member of a committee $Q$ sends $Q$'s bit to each real party $i$ once for every committee that contains $i$. We remove this unnecessary message duplication. Then, we observe that BA is not needed: It is fine if each real party (rather than each committee) forms its individual opinion on the published bits. With the message duplication and the BA instances removed, the cost of $q$ committees of size $s$ publishing their bits drops from $O(q^2(s^2 + A(s)))$ words (where $A(s)$ is the cost of each BA instance) to $O(qsn)$. To further reduce $O(qsn)$ to $O(qs^2 + qn\log n)$, we observe that the committees need not publish their bits perfectly: Some miscommunication can be tolerated as long as $\floor{\frac{2n}{3}} + 1$ real honest parties correctly learn all but $O(\sqrt{q})$ of the good committees' bits. Therefore, we can design and use an imperfect bit publishing protocol where each committee $Q$ reaches crusader agreement on the bit it generated with $O(s^2)$ words, and then the members of $Q$ send their crusader agreement outputs to $O(\frac{n \log n}{s})$ parties outside $Q$ on average per committee member. Finally, we choose $q$ and $s$ in a way which balances the complexity of strong coin tossing and bit publishing, while respecting the safety constraints \linebreak $qs^2 = \Omega(n^2)$ and $s = \Omega(\log q)$ so that only $O(\sqrt{q})$ of the committees can be bad.

Roughly speaking, given a strong coin costing $\widetilde{O}(n^k)$ bits, we let $s \approx n^{1/k}$ and $q \approx n^{2 - 2/k}$ to balance the strong coin tossing cost $\widetilde{O}(qs^k)$ and the bit publishing cost $O(qn\log n)$. This leads to a total cost of $\widetilde{O}(n^{3-2/k})$ bits. We note that even without the imperfect bit publishing (i.e. with the bit publishing cost $O(qsn)$), we could get a subcubic total cost of $\widetilde{O}(n^{3 - 1/(k-1)})$ by setting $s \approx n^{1/(k-1)}$ and $q \approx n^{2 - 2/(k-1)}$; however, this would be worse than $\widetilde{O}(n^{3-2/k})$.

We get our bit publishing protocol by observing the following fact: Given a committee $Q$ of size $s = O(\sqrt{n})$ which contains at most $\frac{n}{3}$ corrupt parties, we can deterministically fix $n$ subsets $Q_1, \dots, Q_n \subseteq Q$ of size $O(\log n)$ such that all but $O(\sqrt{n})$ of these subsets are majority-honest. This means that we can task each subset $Q_i$ with sending the committee's bit to the party $i$, and doing so will result in a large set $P^*$ of $n - O(\sqrt{n})$ parties such that each party $i \in P^*$ can learn the committee's bit correctly by receiving it from the majority-honest subset $Q_i$. We will formalize this idea with sparse bipartite graphs. We note that one can find similar uses of sparse bipartite graphs in the prior committee-based agreement literature. \linebreak In particular, \cite{ks10} uses them to solve byzantine agreement with $\widetilde{O}(n\sqrt{n})$ bits of communication in a synchronous network where an adaptive adversary corrupts $t \leq (\frac{1}{3} - \varepsilon)n$ parties. Note that when viewed as functions mapping each party $i \in [n]$ to the subset $Q_i \subseteq Q$, the bipartite graphs we use in this work are closely related to averaging samplers \cite{ks10}.

\begin{table}[ht]
\centering 
\begin{threeparttable}
    \renewcommand{\tnote}[1]{\textsuperscript{#1}}
    \renewcommand{\arraystretch}{1.15}
    \setlength{\aboverulesep}{0.4pt}
    \setlength{\belowrulesep}{0.4pt}
    \setlength{\tabcolsep}{4.32pt}
    \caption{The most efficient protocols we know for asynchronous coin tossing with at least constant fairness against $t = \Theta(n)$ strongly adaptive byzantine faults, assuming secure channels. With $\kappa$ we represent a security parameter, and we assume that $\kappa = \Omega(\log n)$.}\label{compare}
    \label{comparisons}
    \begin{tabular}{cccccc} 
        \toprule
        \textbf{Threshold} & \textbf{Security} & \textbf{PKI-Free?} & \textbf{Communication (Bits)} & \textbf{Latency} & \textbf{Source} \\ \midrule
        $t < \frac{n}{4}$ & perfect & \ding{51} & $O(n^4\log n)$ & $O(1)$ & \cite{aaps24} \\
        $t < \frac{n}{3}$ & crypto & \ding{51} & $O(n^3\kappa\log n)$ & $O(\log n)$ & \cite{fkt22}\tnote{a} \\
        $t < \frac{n}{3}$ & crypto & \ding{51} & $O(n^3\kappa)$ & $O(1)$ & \cite{ddlmrs24}\tnote{b} \\
        $t < \frac{n}{3}$ & crypto & \ding{55} \ (plain PKI) & $O(n^2\kappa^2)$ & $O(\log n)$ & \cite{abls25}\tnote{c} \\
        $t < \frac{n}{3}$ & crypto & \ding{55} \ (silent setup)\tnote{d} & $O(n^2\kappa)$ & $O(1)$ & \cite{ft25} \\
        $t < \frac{n}{2}$ & crypto & \ding{55} \ (unique t.\ sigs)\tnote{e} & $O(n^2\kappa)$ & 1 & \cite{cks00} \\ \midrule
        $t \leq (\frac{1}{4} - \varepsilon)n$ & perfect & \ding{51} & $O(n^{2.5}(\varepsilon^{-8} + \log n)\log n)$ & $O(\log n)$ & this paper \\
        $t \leq (\frac{1}{3} - \varepsilon)n$ & crypto & \ding{51} & $O(n^{7/3}\varepsilon^{-6}\kappa\log n)$ & $O(\log n)$ & this paper \\
        \bottomrule
    \end{tabular}
    \begin{tablenotes}
        \item[a] In \cite{fkt22} there is also a common coin without the $\log n$ factor in the communication complexity and latency, but this coin is not proven to be adaptively secure.
        \item[b] The hash-based protocol in \cite{ddlmrs24} should be adaptively secure according to one of its authors \cite{shoup24}.
        \item[c] The common coin in \cite{abls25} can be configured to achieve the latency $O(r)$ for any $r \in \{1,\dots,\log n\}$, with this costing $O(n^{2 + r}\kappa^2)$ bits. The row in the table is for $r = O(\log n)$.
        \item[d] In \cite{ft25}, the authors rely on threshold signatures and a common reference string (CRS). Informally, the setup is silent since the parties can generate their keys without interacting with each other.
        \item[e] While the common coin in \cite{cks00} is more efficient than the one in \cite{ft25}, the coin in \cite{cks00} requires the stronger setup assumption of unique threshold signatures.
    \end{tablenotes}
\end{threeparttable} 
\end{table}

In Table \ref{compare}, we compare some adaptively secure common coins from the literature with what our method can produce. The table excludes some (PKI-based) coins that can tolerate $t = \Theta(n)$ faults with $\widetilde{O}(n \cdot \mathrm{poly}(\kappa))$ bits of communication \cite{bkll20,acgsy25,cks20}. The reason is that these coins do not achieve \emph{strongly} adaptive security; that is, adaptive security even when the adversary can drop messages by adaptively corrupting their senders. While we are not aware of any result showing that common coins with subquadratic message complexities cannot tolerate $t = \Theta(n)$ strongly adaptive faults, we note that such a lower bound exists for BA \cite{acdnprs19}.

There are many works in the literature that use committees for common coin tossing and for related tasks such as distributed key generation; see for example \cite{bracha87,ks10,kapron10,flt24,acgsy25,abls25,bkll20,cks20}. While random committees are more commonly used, there are works that use deterministic committees too, like the aforementioned work of Bracha \cite{bracha87} as well as a recent paper on subcubic-complexity synchronous distributed key generation \cite{flt24} which shards the $n$ parties into $\sqrt{n}$ committees of size $\sqrt{n}$, at least one of which is majority-honest due to the adversary corrupting $t < \frac{n}{2}$ parties. Adaptive adversaries can complicate the use of random committees because if one uses random committees naively, then one will be vulnerable to the fact that the adaptive adversary can adaptively corrupt committee members after they are randomly chosen. Still, one can combat this vulnerability and use random committees against adaptive adversaries \cite{bkll20,cks20,abls25,acgsy25}. In particular, \cite{abls25} even achieves strongly adaptive security. However, \cite{bkll20,cks20,abls25,acgsy25} all require PKI setups, for tasks such as message signing and committee sampling. In this work, we want to see what we can achieve without PKI and even without cryptography, even if this causes our common coins to not be as efficient as PKI-based ones.

Our method can produce a $\delta$-fair coin for any fairness parameter $\delta \in (0, 1)$. However, if $\delta = 1 - o(1)$, then it requires a stricter bound than $O(\sqrt{q})$ on the number of bad committees. For this, we let the committee size $s$ scale with $\frac{1}{1 - \delta}$ until when it reaches $n$. This means that \linebreak our method should only be used when $\delta = 1 - \Omega(\frac{1}{n})$. When $\delta$ is closer to $1$, one gets $s = n$, and committees of size $n$ do not make sense to use.

In Section \ref{strong-coin-chapter}, we use our method. We get the strong coins required for this via the Monte Carlo framework of \cite{fkt22} that allows one to get strong coins via asynchronous verifiable secret sharing (AVSS \cite{cr93}). With this framework, we obtain two strong coins: A perfectly secure one costing $O(n^4)$ words, and a cryptographically secure one costing $O(n^3\log n)$ messages of size $O(\kappa)$ (where $\kappa = \Omega(\log n)$ is a security parameter). Then, by applying our transformation, we get a perfectly secure binary coin $\C_\mathsf{perfect}$ and a cryptographically secure one $\C_\mathsf{crypto}$, which can both be set to achieve $\delta$-fairness for any $\delta \in (0, 1)$. Below, we assume $\delta = 1 - \Omega(\frac{1}{n})$. \begin{itemize}
    \item The coin $\C_\mathsf{perfect}$ is perfectly secure, which means that it achieves its liveness and $\delta$-fairness guarantees without any conditions. For any $\varepsilon > 0$, it can be set to tolerate $t \leq (\frac{1}{4} - \varepsilon)n$ faults. This coin costs $O(n^{2.5}(\varepsilon^{-8}(1-\delta)^4 + \log n))$ words, and its latency is $O(\log n)$. To obtain it, we use the perfectly secure AVSS scheme from \cite{aaps24}.
    \item The coin $\C_\mathsf{crypto}$ is cryptographically secure. For any $\varepsilon > 0$, $\C_\mathsf{crypto}$ can be set to tolerate $t \leq (\frac{1}{3} - \varepsilon)n$ faults. It costs $O(n^{7/3}(1-\delta)^3\varepsilon^{-6}\log n)$ messages of size at most $O(\kappa)$, and its latency is $O(\log n)$. To obtain it, we use the hash-based AVSS scheme from \cite{hashrand}, which is \linebreak secure in the programmable random oracle (pRO) model.\footnote{While the authors of \cite{hashrand} prove their AVSS scheme's security in the non-standard pRO model, they state that with recent techniques \cite{shoup24,ss24} the pRO model can be replaced with the more concrete assumptions about the hash function that it is collision-resistant and linear-hiding.}
\end{itemize}

Note that even though secure channels are usually implemented via cryptography, there are good reasons to distinguish secure channel use from cryptography use. There are ways to get secure channels without cryptography (using channel noise \cite{wyner} or quantum communication \cite{quantum-channels}), and it has recently been shown that one can transform classical BA protocols that use secure channels into full-information quantum ones \cite{quantumba}.

One can obtain a binary asynchronous BA protocol with $O(1)$ expected sequential tosses of a constant-fairness binary coin, with an expected $O(1)$ latency and $O(n^2)$ constant-size message complexity overhead \cite{mmr15}. Hence, the coins in Table \ref{comparisons} also lead to BA protocols with the same communication complexities and latencies shown in the table (in expectation). To our knowledge, our common coins are the first PKI-free ones in the literature that tolerate $t = \Theta(n)$ adaptive faults with $o(n^3)$ bits of communication (and even $o(n^4)$ bits for perfect security), and they lead to the first such asynchronous BA protocols.

While our method is primarily intended for binary coin tossing, the coins $\C_\mathsf{perfect}$ or $\C_\mathsf{crypto}$ can be used for leader election too, with $\widetilde{O}(1)$ times as much communication as using them to obtain binary coin tosses with the same fairness. This is because one can for any $\delta \in (0, 1)$ and $\ell \geq 1$ get a $\delta$-fair $\ell$-bit coin toss by tossing a $(1 - \frac{1 - \delta}{\ell})$-fair binary coin $\ell$ times in parallel to determine each bit of the $\ell$-bit output. We discuss this in Section \ref{strong-coin-chapter}.

Both $\C_\mathsf{perfect}$ and $\C_\mathsf{crypto}$ have the latency $O(\log n)$. The reason is that the strong coins we use to obtain them take $\Theta(r)$ rounds to achieve $(1-2^{-r})$-fairness, and we need $r = \Theta(\log n)$ for them to be fair enough. To achieve the latency $O(1)$ with out method, we would need to use a constant-latency strong coin. For instance, if we used the one-round strong coin from \cite{cks00} which is based on unique threshold signatures, then the latency of the transformed coin would be $6$. However, this transformed coin would be less efficient than the original strong coin, and it would also inherit the unique threshold signature setup assumption. As far as we know, there are no constant-latency alternatives to the $\Theta(\log n)$-latency strong coins we use in this work that would suit our purposes. So, it remains open for future work to bring the latency further down to $O(1)$ while keeping the communication subcubic.

Our main result is Theorem \ref{simplethm}. With this theorem, we break the $\Omega(n^3)$ communication complexity barrier for asynchronous coin tossing and thus for BA against $t = \Theta(n)$ adaptive byzantine faults without PKI, and even with perfect security using only secure channels. To our knowledge, it was not known before that BA against $t = \Theta(n)$ adaptive faults can be solved with $o(n^3)$ communication without at least using a PKI setup, and even the fact that one does have to use unique threshold signatures \cite{cks00} for this was only discovered in the last two years \cite{ft25,abls25,acgsy25}. Our transformation is fundamentally a refinement of Bracha's \cite{bracha87}, and we find it surprising that our adjustments (which do not rely on any complex cryptographic \linebreak machinery) remained undiscovered for so long. We hope that this paper will attract more attention to what one can do without PKI, and even without any cryptography at all. In particular, it remains open to bring the asymptotic communication complexity of adaptively secure asynchronous common coin tossing against $t = \Theta(n)$ faults without PKI further down to $n^{2 + o(1)}$, and we find it plausible that the core ideas required for this are already present in the literature. Practical improvements would also be welcome, given that our transformation is primarily of theoretical interest due to the large overheads involved for realistic values of $n$.

\section{Model}

We consider an asynchronous network of $n$ parties $1, \dots, n$, fully connected via reliable and authenticated channels. Each party knows the other parties' IDs.

We design our protocols against an adaptive adversary that corrupts up to $t$ of the parties during protocol execution, depending on the information it gathers while the parties run the protocol. Once corrupted, a party becomes byzantine, which means that it can arbitrarily deviate from the protocol under the adversary's control. If a party is never corrupted by the adversary, then we call it honest. We do not assume secure erasures, which means that if the adversary corrupts a party, then it learns everything the party ever knew.

The adversary schedules messages as it wants with arbitrarily long delays, only subject to the rule that it has to eventually deliver every message whose sender is honest. The adversary is strongly adaptive, which means that if any party sends a message, then the adversary can adaptively corrupt the party and then drop the message instead of delivering it.

Our transformation itself works in the full-information setting, where the adversary is omniscient. However, as this setting forbids the strong coins that the transformation requires \cite{hk26}, we obtain our common coins $\C_\mathsf{perfect}$ and $\C_\mathsf{crypto}$ by assuming a secure channel setup. That is, on top of assuming that the parties are connected via fully reliable and authenticated channels, we assume that the adversary cannot read a party' internal state without corrupting the party, and that it cannot read any message in transit without corrupting its sender or its recipient. Note that the message privacy assumption can be made without loss of generality in the cryptographic setting thanks to key exchange protocols \cite{dh76} and symmetric encryption.

Our transformed common coin protocol $\mathit{T}(\C_\str,z,k,\varepsilon,\alpha)$ achieves liveness, which means that if the honest parties all run some common instance of the protocol forever, then they all output from it. A party cannot stop running (terminate) when it outputs because this could lead to the other parties never obtaining outputs. If this is insufficient and termination is required, one can use the termination procedure in \cite{mw25}, which can terminate any binary coin with $O(n^2)$ added constant-size messages and $3$ added latency.

We define protocol latency based on \cite{aw24}. We imagine that every event $e$ that happens in a protocol's execution has a timestamp $t_e$ given by an external global clock. The clock respects causality, which means if an event $e$ must logically have happened before another event $e'$, then $t_e < t_{e'}$. If a protocol's latency is $R$, then the following is true with respect to the clock: \linebreak If by some time $T$ every honest party has become active (has begun sending messages and has acquired an input if required), then by the time $T + R\Delta$ the protocol will have satisfied its liveness requirement, where $\Delta$ is the maximum amount of time the adversary ever waits to deliver a message whose sender is honest before the requirement is satisfied. For a common coin and for BA, the requirement is the pure (independent-of-safety) one that every honest party outputs, though this will not be the case for our bit publishing protocol.

\section{Bit Publishing}

In this section, we present the protocol through which the committees publish the bits they generate. Let $Q$ be a committee with less than $|Q|/3$ byzantine parties in it such that each honest party $i \in Q$ has an input bit $b_i$. The protocol $\mathit{Publish}(Q, d)$ below allows the parties in $Q$ to collectively publish their input bits to the parties outside $Q$. We primarily design it for when the honest parties in $Q$ have some common input bit $b_Q$. When this is the case, the protocol $\mathit{Publish}(Q, d)$ ensures for some set $P^* \subseteq [n]$ of more than $n - d$ parties that every honest party in $P^*$ outputs $b_Q$, indicating the party's belief that the committee $Q$ published this bit. However, even when the honest parties in $Q$ have conflicting inputs, $\mathit{Publish}(Q, d)$ guarantees that the honest parties in $P^*$ all output from $\mathit{Publish}(Q, d)$. Note that the set $P^*$ \linebreak is not fixed in advance; it depends on who the adversary corrupts.

\subparagraph{Crusader Agreement.} The first step of $\mathit{Publish}(Q, d)$ is that the parties in $Q$ run a binary crusader agreement protocol (which we will refer to as $\mathsf{Crusader}$) among themselves to reach crusader agreement on their inputs. Crusader agreement \cite{d82, crusader-modern, aw24} is a well-known weaker variant of byzantine agreement where the parties may output a special value $\bot$ outside the protocol's input domain ($\{0, 1\}$ for binary crusader agreement) if they have no common input. If the honest parties run a crusader agreement protocol with any common input $v$, then they must all output $v$, as in byzantine agreement. However, if their inputs differ, then they only need to reach weak agreement, which means that there should exist some $v$ such that they all output either $v$ or $\bot$. The permission for the extra output $\bot$ means that $n$ parties can reach crusader agreement with $O(n^2)$ constant-size messages and $3$ latency when less than a third of them are byzantine (via the protocol in \cite{mw25}, which calls this task 1-graded consensus).

\subparagraph{Sparse Communication.} In $\mathit{Publish}(Q, d)$, we use a bipartite graph $G_{Q, d}$ with the partite sets $Q$ and $V = \{v_1, \dots, v_n\}$ such that a party $i \in Q$ sends its $\mathsf{Crusader}$ output to a party $j \in [n]$ iff $i \in Q$ and $v_j \in V$ are adjacent in $G_{Q, d}$. Naturally, to minimize communication we want $G_{Q, d}$ to be sparse, and weakening $\mathit{Publish}(Q, d)$'s guarantees by increasing $d$ should permit more sparsity. We formalize this with Lemma \ref{sparselemma}, which says that it suffices for each vertex $v_i \in V$ in $G_{Q, d}$ to have $\Delta = \min(\ceil{\frac{2|Q|}{3}}, \ceil{30(\frac{|Q| \ln 2}{d} + \ln n)})$ neighbors in $Q$.

\apxthm{lemma}{sparselemma}{
    For all integers $n, d \geq 1$ and all $s \in [n]$, there exists a bipartite graph $(Q,V,E)$ whose two partite sets $Q$ of size $s$ and $V$ of size $n$ are such that every vertex in $V$ has exactly $\Delta = \min(\ceil{\frac{2s}{3}}, \ceil{30(\frac{s\ln 2}{d} + \ln n)})$ neighbors in $Q$, and for every set $B \subseteq Q$ of size less than $\frac{s}{3}$ \linebreak there are less than $d$ vertices in $V$ with at least $\frac{\Delta}{2}$ neighbors in $B$.
}

We prove Lemma \ref{sparselemma} in the \hyperref[sparselemma*]{appendix} using the probabilistic method. That is, given any $n$, $s$ and $d$, we construct a bipartite graph by choosing $\Delta$ random neighbors (without repetition) in $Q$ for each vertex in $V$, and show that this graph is satisfactory except with $2^{-\Theta(s)} < 1$ probability. This implies that there exists a graph with the desired properties.

In $\mathit{Publish}(Q, d)$, the parties use a bipartite graph $G_{Q, d} = (Q, V, E)$ which is as described in Lemma \ref{sparselemma}. We assume that $G_{Q, d}$'s vertices are labeled such that its partite set $Q$ is the publishing committee $Q \subseteq [n]$ of $\mathit{Publish}(Q, d)$. The other partite set $V = \{v_1, \dots, v_n\}$ is a copy of $[n]$, relabeled so that it does not overlap with $Q$.

\begin{algobox}{$\mathit{Publish}(Q, d)$}
    \underline{Code for a party $i \in Q$:}
    \begin{algorithmic}[1]
    \State run a common instance of $\mathsf{Crusader}$ with the parties in $Q$, where your input is $b_i$
    \State upon outputting some $y_i$ from $\mathsf{Crusader}$, send $y_i$ to every party $j \in [n] \setminus Q$ such that $i$ and $v_j$ are adjacent $G_{Q, d}$
    \State output $b_i$ \llabel{publish-direct}
    \algstore{publish} \end{algorithmic}
    \underline{Code for a party $i \not \in Q$}
    \begin{algorithmic}[1] \algrestore{publish}
        \State wait to receive either $b$ or $\bot$ from more than $\frac{\Delta}{2}$ parties in $N(v_i)$ (the neighborhood of $v_i$ in $G_{Q, d}$) for any bit $b$, and output $b$ if this happens
    \end{algorithmic}
\end{algobox}

\begin{lemma} \label{publemma}
    Let $Q \subseteq [n]$ be any committee whose honest members $i$ have input bits $b_i$. For any $d \geq 1$, if the parties $1,\dots,n$ run $\mathit{Publish}(Q, d)$ against a byzantine adversary and this adversary does not corrupt $\frac{|Q|}{3}$ or more members of $Q$, then there is some execution-dependent set $P^*$ of more than $n - d$ parties such that 1) every honest party in $P^*$ outputs a bit from $\mathit{Publish}(Q, d)$, and 2) if the honest members of $Q$ have a common input $b_Q$, then every honest party in $P^*$ outputs $b_Q$ from $\Pub(Q, d)$.
\end{lemma}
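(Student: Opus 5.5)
Let $B \subseteq Q$ be the set of members of $Q$ that the adversary ever corrupts. By hypothesis $|B| < \frac{|Q|}{3}$. Apply Lemma \ref{sparselemma} to the graph $G_{Q,d}$ with this $B$: fewer than $d$ vertices of $V$ have at least $\frac{\Delta}{2}$ neighbors in $B$. Define $P^*$ to be the set of parties $j \in [n]$ for which $v_j$ has fewer than $\frac{\Delta}{2}$ neighbors in $B$. Then $|P^*| > n - d$. The set depends only on who gets corrupted, so it is execution-dependent, as the statement allows. For every $j \in P^*$, more than $\frac{\Delta}{2}$ of the $\Delta$ neighbors of $v_j$ are honest members of $Q$.

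We split into cases on whether the party lies in $Q$. An honest party $j \in P^* \cap Q$ outputs $b_j$ directly on line \ref{line:publish-direct}. This gives property 1 immediately, and property 2 as well, since $b_j = b_Q$ when the honest members share the input $b_Q$.

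Now take an honest $j \in P^* \setminus Q$. Fewer than $|Q|/3$ members of $Q$ are byzantine, so $\mathsf{Crusader}$ is run within its fault tolerance. By its liveness, every honest member of $Q$ eventually outputs some $y_i \in \{0,1,\bot\}$. Each such member sends $y_i$ to its neighbors outside $Q$, and these messages are eventually delivered.

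By weak agreement, there is a bit $v$ such that every honest $y_i$ lies in $\{v, \bot\}$. Hence $j$ eventually receives $v$ or $\bot$ from more than $\frac{\Delta}{2}$ honest parties in $N(v_j)$, so its wait condition is met with $b = v$ and $j$ outputs a bit. This proves property 1.

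For property 2, suppose the honest members share the input $b_Q$. By validity of crusader agreement, every honest $y_i$ equals $b_Q$. Fewer than $\frac{\Delta}{2}$ neighbors of $v_j$ are byzantine, so $j$ can never receive $1-b_Q$ or $\bot$ from more than $\frac{\Delta}{2}$ neighbors. Consequently the only bit it can output is $b_Q$.

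One point needs care. The wait rule accepts "either $b$ or $\bot$" for a bit $b$, so under conflicting inputs $j$ might output $1-v$ when byzantine $\bot$/$(1-v)$ messages combine with honest $\bot$'s. That is harmless for property 1, which only requires that some bit be output.

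The main obstacle is matching the threshold in Lemma \ref{sparselemma} with the strict inequalities. The lemma bounds vertices with at least $\frac{\Delta}{2}$ neighbors in $B$. So for $j \in P^*$ the byzantine neighbors number strictly fewer than $\frac{\Delta}{2}$, and the honest neighbors strictly more than $\frac{\Delta}{2}$, which is exactly what both directions of the argument use. We also need to fix $B$ as the set of all members of $Q$ ever corrupted, not just those corrupted at some point in time, so that $P^*$ is well defined.
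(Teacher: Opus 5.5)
Your proof is correct and follows the paper's argument step for step. You use the same $P^*$: your condition $|N(v_j)\cap B| < \frac{\Delta}{2}$ is equivalent to the paper's $|N(v_j)\setminus B| > \frac{\Delta}{2}$ because $|N(v_j)| = \Delta$. You also use the same split into $P^*\cap Q$ and $P^*\setminus Q$, and the same appeal to crusader agreement's liveness, weak agreement and validity together with the honest-majority neighborhood. Your explicit use of Lemma~\ref{sparselemma} for $|P^*| > n-d$, and your remark that $j$ may output $1-v$ under conflicting inputs, are accurate details the paper leaves implicit.
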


\begin{proof}
    Suppose the adversary corrupts less than a third of $Q$. Then, the honest parties in $Q$ safely reach crusader agreement on their inputs, and for some bit $b^*$ they all obtain crusader agreement outputs in $\{b^*, \bot\}$. Let $B \subseteq Q$ be the set of parties in $Q$ the adversary adaptively corrupts while the parties run $\mathit{Publish}(Q, d)$, and let $P^* = \{i \in [n] : |N(v_i) \setminus B| > \frac{\Delta}{2}\}$. Every honest party $i \in Q$ directly outputs its input $b_i$ from $\mathit{Publish}(Q, d)$, and every honest party $i \in P^* \setminus Q$ eventually becomes able to output some bit from $\mathit{Publish}(Q, d)$ due to it receiving either $b^*$ or $\bot$ from the $|N(v_i) \setminus B| > \frac{\Delta}{2}$ honest parties in $Q$ who sent their crusader agreement outputs to $i$. Furthermore, if the honest parties in $Q$ have any common input $b_Q$, then they all output $b_Q$ from both $\mathsf{Crusader}$ and from $\mathit{Publish}(Q, d)$, and every honest party $i \in P^* \setminus Q$ outputs $b_Q$ from $\mathit{Publish}(Q, d)$ because it receives $b_Q$ from $|N(v_i) \setminus B)| > \frac{\Delta}{2}$ honest parties in $N(v_i)$ while possibly receiving $\bot$ or $1 - b_Q$ from only the $|N(v_i) \cap B| < \Delta - \frac{\Delta}{2} = \frac{\Delta}{2}$ corrupt parties in $N(v_i)$. Hence, we conclude that every honest party in $Q \cup (P^* \setminus Q) = P^*$ outputs some bit from $\mathit{Publish}(Q, d)$, and that this bit is $b_Q$ if $b_Q$ is the common input of the honest parties in $Q$. This concludes the proof since $G_{Q, d}$ guarantees $|P^*| > n - d$ by definition.
\end{proof}

\subparagraph{Complexity.} In $\mathit{Publish}(Q, d)$, the parties in $Q$ first run $\mathsf{Crusader}$, and this costs the protocol $O(|Q|^2)$ constant-size messages and 3 latency \cite{mw25}. Following this, on average they send their $\mathsf{Crusader}$ outputs to $\frac{|[n] \setminus Q|\Delta}{|Q|}$ parties in $[n] \setminus Q$, with this costing $(n - |Q|)\Delta = O(n(\frac{|Q|}{d} + \log n))$ additional constant-size messages and one more round. All in all, $\mathit{Publish}(Q, d)$'s latency is $4$, and in it the parties send $O(|Q|^2 + n(\frac{|Q|}{d} + \log n))$ constant-size messages in total. The (impure) \linebreak liveness requirement that is satisfied with the latency $4$ is that every honest party $i$ such that \linebreak $|N(v_i) \setminus B| < \frac{\Delta}{2}$ outputs a bit which some honest party in $Q$ has as its input.

\section{The Transformation} \label{sectrans}

In this section, we present the transformed binary coin $\mathit{T}(\C_\str,z,k,\varepsilon,\alpha)$. We begin by stating the full version of Theorem \ref*{simplethm}, rather than the simplified version we gave in Section \ref{sec-contributions}. Note that the full theorem implies the simplified one when $\C_\str$'s message complexity is $\widetilde{O}(n^k)$.

\newcounter{savedtheorem}\setcounter{savedtheorem}{\value{theorem}}
\setcounter{fullthm}{2}
\addtocounter{fullthm}{-1}
\begin{theorem}[Full Version]
\label{fullthm}
    Let $z,k,\varepsilon,\alpha,\delta$ be some positive real parameters such that $z < 1$, $k \geq 2$, $\varepsilon < \alpha \leq \frac{1}{3}$ and $\delta \leq 1$. Suppose we have some $\delta$-fair binary common coin $\C_\str$ such that when $n$ parties run it, $\C_\str$ can tolerate up to $t < \alpha n$ adaptive byzantine faults, and it costs at \linebreak most $M(n)$ messages of size at most $L(n)$ for some functions $M,L$ of $n$. Then, we can obtain a binary common coin $\mathit{T}(\C_\str,z,k,\varepsilon,\alpha)$ with the following properties:  \begin{itemize}
        \item $\mathit{T}(\C_\str,z,k,\varepsilon,\alpha)$ tolerates up to $t \leq (\alpha - \varepsilon)n$ adaptive byzantine faults.
        \item $\mathit{T}(\C_\str,z,k,\varepsilon,\alpha)$ is $(\delta^q(1 - z))$-fair, where $q = 2\ceil{n^{2-2/k}} + 1$.
        \item There is some committee size $s = O(\min(n, \frac{2\alpha - \varepsilon}{z\varepsilon^2}(n^{1/k} + \log n)))$ such that the common coin $\mathit{T}(\C_\str,z,k,\varepsilon,\alpha)$ costs $O(n^{2-2/k}M(s))$ messages of size $L(s) + O(\log n)$ plus $O(\frac{sn^{3 - 3/k}}{z(1 - 3\alpha + 3\varepsilon)} + n^{2-2/k}(s^2 + n\log n))$ more messages of size $O(\log n)$ when $n$ parties run it, and such that if $\C_\str$'s latency is at most a positive integer $R$ when $s$ parties run it, then $\mathit{T}(\C_\str,z,k,\varepsilon,\alpha)$'s latency is at most $R + 5$ when $n$ parties run it.
    \end{itemize}
\end{theorem}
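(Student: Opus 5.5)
We will build $\mathit{T}(\C_\str,z,k,\varepsilon,\alpha)$ as a virtual $q$-party execution of Ben-Or's coin, with $q = 2\ceil{n^{2-2/k}}+1$ committees $Q_1,\dots,Q_q$ of size $s$ acting as the virtual parties. The protocol has three phases. First, each committee $Q_j$ runs its own instance of $\C_\str$ among its $s$ members. Second, each committee publishes its output bit with $\Pub(Q_j,d)$, where $d$ is chosen below. Third, each party $i$ waits until it has output from enough $\Pub$ instances. It then computes the majority of the bits it obtained and broadcasts it. A party finally outputs a bit once it has received that bit as a majority report from $t+1$ parties, or from $n-t$ parties (this is a Bracha-style amplification step). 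For the latency accounting, $\C_\str$ takes $R$, $\Pub$ takes $4$, and the final broadcast takes $1$, giving $R+5$.

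\textbf{Committee design.} This is the main obstacle. We need a deterministic family of $q$ committees of size $s = O(\min(n, \frac{2\alpha-\varepsilon}{z\varepsilon^2}(n^{1/k}+\log n)))$ such that for every corrupted set of at most $(\alpha-\varepsilon)n$ parties, at most $\beta\sqrt q$ committees are bad. A committee is bad if it contains at least $\alpha s$ corrupted parties. The constant $\beta$ is tuned to $z$ so that Ben-Or's coin among $q$ virtual parties with $\beta\sqrt q$ adversarial ones still succeeds with probability at least $1-z$.

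We plan to prove existence by the probabilistic method, as in Lemma \ref{sparselemma}. Pick each committee uniformly at random. For a fixed corrupted set $B$ of size $(\alpha-\varepsilon)n$, a Chernoff bound gives that a single committee is bad with probability $p \le e^{-\Omega(\varepsilon^2 s/\alpha)}$. Moreover, the number of bad committees, $\mathrm{Bin}(q,p)$, exceeds $m=\beta\sqrt q$ with probability at most $\binom{q}{m}p^m$. We then union bound over the at most $2^n$ possible sets $B$. This requires $m\cdot\Omega(\varepsilon^2 s/\alpha) \gtrsim n + m\log q$. Since $m\approx\sqrt q\approx n^{1-1/k}$, this needs $s\gtrsim \frac{\alpha}{\beta\varepsilon^2}(n^{1/k}+\log n)$, which matches the stated bound with $\beta\sim z$. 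We will also handle the cap $s\le n$ and the bipartite-graph case in which $s$ is comparable to $n$.

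\textbf{Publishing.} We apply Lemma \ref{publemma} with $d$ chosen so that the total miscommunication is small. Each good committee's bit fails to reach at most $d$ parties. Counting pairs (party, good committee) with a wrong or missing output, the total is at most $qd$. Hence all but a $(1-3\alpha+3\varepsilon)$-dependent fraction of the honest parties see all but $O(\beta\sqrt q)$ good committees correctly. Choosing $d \approx z(1-3\alpha+3\varepsilon)\, n^{1/k}$ then gives the $\Pub$ cost $qn\Delta = O(q(s^2 + \frac{ns}{d} + n\log n))$. The middle term becomes $\frac{sn^{3-3/k}}{z(1-3\alpha+3\varepsilon)}$. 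The $\C_\str$ cost contributes $qM(s)$, and its messages carry $O(\log n)$-bit committee tags.

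\textbf{Fairness and liveness.} With probability $\delta^q$, every committee's $\C_\str$ instance succeeds, and these successes are independent across instances. Conditioned on this, the good committees' bits are i.i.d.\ uniform. By a central-limit/anti-concentration argument, with probability at least $1-z$ one bit $b$ leads by more than $c\beta\sqrt q$ votes among the good committees. This margin dominates the bad committees plus the misread committees, so more than $n-t$ honest parties (from the counting above) compute majority $b$. Then no party can collect $t+1$ reports for $1-b$, and every honest party eventually outputs $b$. The values $v,g$ are unknown in advance because the inner coins' values are unknown.

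Liveness follows from Lemma \ref{publemma}(1): enough honest parties output from enough $\Pub$ instances to trigger broadcasts. Once some party outputs a bit it forwards it, which makes everyone finish. Tuning the waiting thresholds so that this is consistent is the remaining delicate point.
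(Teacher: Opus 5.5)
Your committee construction (your $\binom{q}{m}p^m$ union bound is an equivalent variant of the paper's Chernoff bound on $\sum_{i\in C}|Q_i\cap B|$), your choice $d\approx z(1-3\alpha+3\varepsilon)n^{1/k}$, and your cost accounting all match the paper. The final aggregation step, however, fails.

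The averaging over (party, good committee) pairs only shows that fewer than $\frac{n}{3}(1-3\alpha+3\varepsilon)=\frac{n}{3}-(\alpha-\varepsilon)n$ honest parties mishear many committees. It cannot give your claim that ``more than $n-t$ honest parties compute majority $b$''; there are not even that many honest parties once $t$ are corrupted. Misinformed honest parties genuinely occur. For fixed $z,\varepsilon,\alpha$ one has $\Delta=O(s/d+\log n)=O(\log n)$. So when the margin is of order $\sqrt q$ (which happens with constant probability), the adversary can corrupt about $\frac{\Delta}{2}$ members of $N(v_i)$ in $O(\sqrt q)$ committees, keeping all of them good. This flips party $i$'s majority with only $O(\sqrt q\log n)=o(n)$ corruptions. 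Hence $1-b$ can be reported by the $t$ corrupt parties plus some honest ones, i.e.\ by at least $t+1$ parties. So ``no party can collect $t+1$ reports for $1-b$'' is false, and any rule that outputs or echoes upon $t+1$ reports lets honest parties output $1-b$. Forwarding after output would also add a round, giving latency $R+6$ rather than $R+5$.

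The missing idea is to pair the $(1-3\alpha+3\varepsilon)$ factor in $d$ with a $\frac{2n}{3}$ threshold instead of $t$-based ones. The paper lets $H^*$ be the honest parties that mishear fewer than $\frac{z'\sqrt q}{12}$ committees, and shows $|H^*|>n-t-\frac{n}{3}(1-3\alpha+3\varepsilon)\ge\frac{2n}{3}$. Each party broadcasts its majority after outputting from $q-\lfloor\frac{5z'\sqrt q}{12}\rfloor$ $\Pub$ instances. Every party in $H^*$ reaches this threshold, and it is small enough that a margin of $\frac{5z'\sqrt q}{8}$ beats the bad, misheard and unheard committees. Each party then outputs the majority of the first $\lfloor\frac{2n}{3}\rfloor+1$ reports it receives. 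As a result, $H^*$ alone gives liveness within one more round, and the reports from outside $H^*$ (corrupt or misinformed) number fewer than $\frac{n}{3}$, a strict minority.

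Two smaller fixes are also needed:
\begin{itemize}
\item Apply anti-concentration to $X=\sum_{i=1}^q b_i$ over all committees' fresh bits, not ``among the good committees''. That set is chosen adaptively, possibly after bits are revealed.
\item Use a non-asymptotic bound such as $\Pr[X=k]<\frac{4}{5\sqrt q}$, together with odd $q$ and the adjusted $z'=\max(\frac{z}{3},z-1.6q^{-0.5})$. A CLT argument does not give exactly $1-z$ for every $n$.
\end{itemize}
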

\setcounter{theorem}{\value{savedtheorem}}

The common coin $\mathit{T}(\C_\str,z,k,\varepsilon,\alpha)$ most fundamentally depends on Lemma \ref{comlemma}, which says that for any $q, c \in \mathbb{Z}_{> 0}$, if we have $q$ committees of size $s = \min(n, \ceil{(\frac{2\alpha - \varepsilon}{\varepsilon^2})(\frac{n \ln 2}{c} + \ln q)})$ in $\mathit{T}(\C_\str,z,k,\varepsilon,\alpha)$, then we will have less than $c$ bad committees.

\apxthm{lemma}{comlemma}{
    For all integers $n,q \geq 1$, all $c \in [q]$ and all positive reals $\varepsilon,\alpha$ so that $\varepsilon < \alpha \leq \frac{1}{3}$, there exists a list of $q$ sets $Q_1, \dots, Q_q \subseteq [n]$, each of size $s = \min(n, \ceil{(\frac{2\alpha - \varepsilon}{\varepsilon^2})(\frac{n \ln 2}{c} + \ln q)})$, such that for every set $B \subseteq [n]$ of size at most $(\alpha - \varepsilon)n$, the number of sets $Q_i$ which satisfy $|Q_i \cap B| \geq \alpha s$ is less than $c$.
}

Lemma \ref{comlemma} is related to Bracha's Lemma 7 in \cite{bracha87}, where he considered the case $q = \Theta(n^2)$, $c = \Theta(n)$, $s = \Theta(\log n)$ and $\alpha = \frac{1}{3}$. We prove Lemma \ref{comlemma} in the \hyperref[comlemma*]{appendix} via the probabilistic method, the same way we prove Lemma \ref{sparselemma}. That is, we let $Q_1, \dots, Q_q$ be independent random $s$-element subsets of $[n]$, and show that with $1 - 2^{-\Theta(n)} > 0$ probability the sets $Q_1, \dots, Q_q$ are satisfactory. Since we do not want to assume that $\C_\str$ supports party repetitions, we use set committees which cannot contain repeated parties rather than multiset committees which can. In the lemma's proof, this causes the number of corrupt parties in a random committee to be distributed hypergeometrically rather than binomially. This is not an issue as one can \linebreak treat a hypergeometric variable as a sum of independent Bernoulli variables \cite{hp14}.

Another lemma we need is Lemma \ref{binomlemma}, an anti-concentration bound for symmetric binomial random variables. The symmetric binomial random variable that we care about is the sum of the committees' random bits, and we will use the lemma to lower bound the probability that this sum is at least a certain distance away from its expectation. This will allow us to lower \linebreak bound $\mathit{T}(\C_\str,z,k,\varepsilon,\alpha)$'s fairness probability.

\begin{lemma} \label{binomlemma}
    For any $n \geq 1$, let $X = \sum_{i = 1}^nb_i$ for some i.i.d.\ Bernoulli random variables $b_1, \dots, b_n$ such that $\Pr[b_i = 0] = \Pr[b_i = 1] = \frac{1}{2}$ for all $i \in [n]$. Then, the random variable $X$ obeys the anti-concentration inequality $\Pr[|X - \frac{n}{2}| < \sigma] \leq \frac{8\sigma}{5\sqrt{n}}$ for all integers $\sigma \geq 0$.
\end{lemma}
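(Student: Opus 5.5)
We bound the probability by counting the integers $x$ with $|x - \frac{n}{2}| < \sigma$ and multiplying by the largest point mass of the symmetric binomial. We have $\Pr[X = x] = \binom{n}{x}2^{-n} \leq \binom{n}{\floor{n/2}}2^{-n} =: p_n$. The number of integers $x$ with $|x - \frac{n}{2}| < \sigma$ is at most $2\sigma - 1$ when $n$ is even. When $n$ is odd, $x - \frac{n}{2}$ is a half-integer, and the count is at most $2\sigma$. If $\sigma = 0$ the event is empty and the bound is trivial, so it suffices to show $2\sigma p_n \leq \frac{8\sigma}{5\sqrt{n}}$, i.e.\ $p_n \leq \frac{4}{5\sqrt{n}}$.

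To bound $p_n$, we use the standard estimate $\binom{2m}{m}4^{-m} \leq \frac{1}{\sqrt{\pi m + \pi/4}}$. This follows by induction from the ratio $\frac{p_{2m+2}}{p_{2m}} = \frac{2m+1}{2m+2}$, or from Wallis-type product bounds such as $\binom{2m}{m}4^{-m} \leq \frac{1}{\sqrt{\pi m}}$.

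\textbf{Even $n = 2m$.} We get $p_n \leq \frac{1}{\sqrt{\pi n/2}} = \sqrt{2/\pi}\,\frac{1}{\sqrt n} \approx \frac{0.798}{\sqrt n} < \frac{0.8}{\sqrt n}$. This is exactly the constant $\frac{4}{5}$, so the constant $\frac{8}{5}$ in the lemma is tight-ish here. We must use the sharper form $\frac{1}{\sqrt{\pi m}}$, which is valid for all $m \geq 1$. We can also use the better count $2\sigma - 1 \leq 2\sigma$ for slack.

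\textbf{Odd $n = 2m+1$.} We have $p_n = \binom{2m+1}{m}2^{-(2m+1)} = \binom{2m}{m}4^{-m}\cdot\frac{2m+1}{2m+2}$. By the refined bound this is at most $\frac{1}{\sqrt{\pi(m+1/4)}}\cdot\frac{2m+1}{2m+2}$. We then check that this is at most $\frac{4}{5\sqrt{2m+1}}$. Squaring, the inequality becomes $\frac{(2m+1)^3}{\pi(m + 1/4)(2m+2)^2} \leq \frac{16}{25}$. For large $m$ the left side tends to $\frac{2}{\pi} \approx 0.637 < 0.64$. The small cases $m = 0,1,2,\ldots$ are checked directly; for example, $n = 1$ gives $p_1 = \frac{1}{2} \leq \frac{4}{5}$.

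\textbf{Main obstacle.} The hardest part is this tight constant: $\sqrt{2/\pi} \approx 0.7979$ versus $0.8$. We therefore need a rigorous bound of the form $\binom{2m}{m}4^{-m} \leq \frac{1}{\sqrt{\pi(m + c)}}$ for a suitable $c \geq 0$, which is cleanest to prove by showing that $(m + c)\,p_{2m}^2$ is increasing in $m$ with limit $\frac{1}{\pi}$. We then verify the odd case uniformly. If the odd case is too tight asymptotically, we fall back on monotonicity of the ratio bound and a finite check of small $n$.
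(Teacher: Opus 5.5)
Your proof is correct. It has the same outer structure as the paper's: bound the largest point mass by $\frac{4}{5\sqrt{n}}$, then multiply by the at most $2\sigma$ integers in the window. The difference is how you bound the point mass.

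The paper treats all $n$ at once with Gamma-function tools:
\begin{itemize}
\item log-convexity of $\Gamma$ gives $\binom{n}{k}2^{-n} \le \Gamma(n+1)2^{-n}/\Gamma(\frac{n}{2}+1)^2$;
\item the duplication formula rewrites this as $\frac{\Gamma(n/2+1/2)}{\sqrt{\pi}\,\Gamma(n/2+1)}$;
\item Gautschi's inequality bounds that by $\sqrt{2/(\pi n)}$.
\end{itemize}
There is no parity split. You instead use the classical Wallis-type estimates for central binomial coefficients. This is more elementary: monotonicity of $(m+\frac14)p_{2m}^2$ follows from the ratio $p_{2m+2}/p_{2m}$, and the constant $\pi$ comes from Wallis' limit, as your last paragraph correctly notes. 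The cost is a separate odd case. The two routes are really the same inequality, since $\binom{2m}{m}4^{-m} = \frac{\Gamma(m+1/2)}{\sqrt{\pi}\,\Gamma(m+1)}$; Gautschi at $x=m$ is exactly $p_{2m} < 1/\sqrt{\pi m}$. The paper's log-convexity step lets it evaluate this expression at the possibly half-integer point $n/2$ instead of splitting by parity.

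One step should be tightened. In the odd case, ``the left side tends to $\frac{2}{\pi}$'' together with ``small cases are checked directly'' does not cover all $m$, because you give no threshold. The inequality is in fact uniform. Your left side equals $\frac{(2m+1)^3}{\pi(4m+1)(m+1)^2}$, and
\[
2(4m+1)(m+1)^2-(2m+1)^3 = 6m^2+6m+1 > 0,
\]
so it is below $\frac{2}{\pi} < \frac{16}{25}$ for every $m \ge 0$.

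A simpler fix avoids the refined constant $m+\frac14$ entirely. Since $\binom{2m+2}{m+1} = 2\binom{2m+1}{m}$, you have $p_{2m+1} = p_{2m+2} \le (\pi(m+1))^{-1/2} = \sqrt{2/(\pi(n+1))}$. So the odd case follows directly from the even one.
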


\begin{proof}
    The binomial random variable $X$ is distributed such that that for all $k \in \{0, \dots, n\}$ one has $\Pr[X = k] = \binom{n}{k}2^{-n} = \frac{n!\,2^{-k}}{k!\,(n-k)!} = \frac{\Gamma(n+1)2^{-n}}{\Gamma(k + 1)\Gamma(n - k + 1)} \leq \frac{\Gamma(n+1)2^{-n}}{(\Gamma(n/2 + 1)))^2}$. The last inequality here is due to the log-convexity of the $\Gamma$ function \cite{bohr1920laerebog}, which means that $\Gamma(a)\Gamma(b) \geq \Gamma(\frac{a+b}{2})$ for all positive reals $a,b$. It is known that $\frac{\Gamma(n+1)}{(\Gamma(n/2 + 1))^2} = \frac{\Gamma(n/2+0.5)2^n}{\Gamma(n/2 + 1)\sqrt{\pi}}$ \cite{elezovic14}, and that $\frac{\Gamma(n/2+x)}{\Gamma(n/2+1)} \leq (n/2)^{x-1}$ when $0 \leq x \leq 1$ \cite{gautschi59}. Finally, when we put all these inequalities together, we get $\binom{n}{k}2^{-n} \leq \frac{\Gamma(n+1)2^{-n}}{\Gamma(k + 1)\Gamma(n - k + 1)} \leq \frac{\Gamma(n+1)2^{-n}}{(\Gamma(n/2 + 1)))^2} = \frac{\Gamma(n/2+0.5)}{\Gamma(n/2+1)\sqrt{\pi}} \leq \frac{(n/2)^{-0.5}}{\sqrt{\pi}} < \frac{4}{5\sqrt{n}}$.

Above, we prove $\Pr[X = k] < \frac{4}{5\sqrt{n}}$ for all $k$. Using this bound and the fact that $|X - \frac{n}{2}| < \sigma$ holds for at most $2\sigma$ values $X$ can attain, we get $\Pr[|X - \frac{n}{2}| < \sigma] = \Sigma_{k \in \mathbb{Z}\,:\,|k - n/2| < \sigma}\Pr[X = k] \leq (2\sigma)\frac{4}{5\sqrt{n}} = \frac{8\sigma}{5\sqrt{n}}$ for all integers $\sigma \geq 0$.
\end{proof}

Finally, we are ready for $\mathit{T}(\C_\str, z, k, \varepsilon, \alpha)$. It works as explained in Section \ref{sec-contributions}. One small detail which we skipped before is that the number of committees $q$ should be odd. Otherwise, $\mathit{T}(\C_\str, z, k, \varepsilon, \alpha)$ would only achieve $(1 - O(q^{-0.5}))$-fairness even if the committees were all good and they all generated and published their bits without any problems at all, because of the tie-breaking required when exactly 50\% of the committees generate each bit. To avoid a $O(q^{-0.5})$ fairness loss, we choose $q$ to be odd, and we use an adjusted fairness loss parameter $z' = \max(\frac{z}{3}, z - 1.6q^{-0.5})$ in $\mathit{T}(\C_\str, z, k, \varepsilon, \alpha)$ instead of $z$.

\begin{algobox}{$\mathit{T}(\C_\str, z, k, \varepsilon, \alpha)$}
    \underline{Initialization:}
    \begin{algorithmic}[1]
        \State let $q = 2\ceil{n^{2 - 2/k}} + 1$ \Comment{the number of committees}
        \State let $z' = \max(\frac{z}{3}, z - 1.6q^{-0.5})$ \Comment{the adjusted fairness loss parameter}
        \State let $c = \ceil{\frac{z'\sqrt{q}}{3}}$ \Comment{the maximum number of bad committees, plus 1}
        \State let $s = \min(n, \ceil{(\frac{2\alpha - \varepsilon}{z'\varepsilon^2})(\frac{n \ln 2}{c} + \ln q)})$ \Comment{the size of each committee}
        \State let $d = \ceil{\frac{z'n(1 - 3\alpha +3\varepsilon)}{36\sqrt{q}}}$ \Comment{the fault parameter for the $\Pub$ instances}
        \State let $Q_1, \dots, Q_q$ be the $q$ publicly known committees of size $s$ such that for any set $B \subseteq [n]$ of size at most $(\alpha - \varepsilon)n$, we have $|\{i \in [q] : |Q_i \cap B| \geq \alpha s\}| < c$.
    \algstore{transform} \end{algorithmic}
    \underline{Code for a party $i$:}
    \begin{algorithmic}[1] \algrestore{transform}
        \State let $v_0, v_1, w_0, w_1 \gets 0, 0, 0, 0$
        \State let $Q_i^* = \{j \in [q] : i \in Q_j\}$ \Comment{the indices of the committees that contain $i$}
        \State for every $j \in Q_i^*$, run an instance of $\C_\str$ (named $\C_\str^j$) with the parties in $Q_j$
        \State for every $j \in [q]$, run $\mathit{Publish}(Q_j, d)$
        \Upon{outputting $b$ from $\C_\str^j$ for any $j \in Q_i^*$}
            \State let $b$ be your input in $\mathit{Publish}(Q_j, d)$
        \EndUpon
        \Upon{outputting $b$ from $\mathit{Publish}(Q_j, d)$ for any $j \in [q]$}
            \State let $v_b \gets v_b + 1$
            \If{$v_0 + v_1 = q - \floor{\frac{5z'\sqrt{q}}{12}}$}\llabel{live-condition}
                \State send $b_\mathsf{maj}$ to every party, where $b_\mathsf{maj} = 0$ if $v_0 \geq v_1$ and $b_\mathsf{maj} = 1$ otherwise\llabel{bcast-majority}
            \EndIf
        \EndUpon
        \Upon{receiving the first bit $b$ that the party $j$ sent you for any $j \in [n]$}
            \State let $w_{b} \gets w_{b} + 1$\llabel{output-condition}
            \If{$w_0 + w_1 = \floor{\frac{2n}{3}} + 1$}
                \State output $0$ if $w_0 \geq w_1$, and output $1$ otherwise\llabel{output-line}
            \EndIf
        \EndUpon
    \end{algorithmic}
\end{algobox}

\begin{proof}[Proof of Theorem \ref{fullthm}] Let $B$ be the set of parties the adversary adaptively corrupts during the execution of $\mathit{T}(\C_\str, z, k, \varepsilon, \alpha)$, and suppose $|B| \leq t \leq (\alpha - \varepsilon)n$ as $\mathit{T}(\C_\str, z, k, \varepsilon, \alpha)$ only has to tolerate $t \leq (\alpha - \varepsilon)n$ faults. The number of good committees (committees $Q_i$ such that $|Q_i \cap B| < \alpha s$) is at least $q - c + 1 > q - \frac{z'\sqrt{q}}{3}$. For any good committee $Q_i$, the proportion of corrupt parties in $Q_i$ is less than $\alpha \leq \frac{1}{3}$; so, $\C_\str^i$ and $\Pub(Q_i, d)$ both work well.

Since the coin $\C_\str$ is $\delta$-fair, we can associate each committee $Q_i$ with two freshly sampled random bits $b_i$ and $g_i$ such that $b_i$ is uniformly random and $\Pr[g_i = 1] = \delta$. If $Q_i$ is a bad committee which contains at least $\alpha s$ corrupt parties, then these bits mean nothing. However, if $Q_i$ is a good committee, then the honest parties in $Q_i$ all output from $\C_\str^i$, and in this case $g_i = 1$ suffices for their outputs from $\C_\str^i$ to all equal $b_i$. This is a valid description of how $\C_\str^i$ behaves since $g_i = 1$ indicates the good event that the honest parties in $Q_i$ are required to all output some freshly sampled uniformly random bit from the instance (that we call $b_i$) from $\C_\str^i$, and $\Pr[g_i = 1] = \delta$ probability by the definition of $\delta$-fairness.

Let us say that an honest party $i$ mishears some committee $Q_j$ if $Q_j$ is a good committee, but the party $i$ either outputs $1 - b_j$ from $\Pub(Q_j, d)$ even though $g_j = 1$, or it outputs nothing from $\Pub(Q_j, d)$. Accordingly, let $H^*$ be the set of honest parties who mishear less than $\frac{z'\sqrt{q}}{12}$ committees.

\begin{lemma} \label{hsize-lemma}
    The set $H^*$ contains at least $\floor{\frac{2n}{3}} + 1$ parties.
\end{lemma}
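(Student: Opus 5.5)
We prove Lemma \ref{hsize-lemma} by a double-counting (averaging) argument over the pairs (honest party, good committee) in which the party mishears the committee. The plan is to upper bound the total number of such mishearing incidents using Lemma \ref{publemma}. Then, by Markov-style counting, we show that few honest parties can mishear $\frac{z'\sqrt{q}}{12}$ or more committees.

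\textbf{Step 1: bounding mishearing per committee.} Fix a good committee $Q_j$. Since $|Q_j \cap B| < \alpha s \leq \frac{s}{3}$, Lemma \ref{publemma} applies to $\Pub(Q_j, d)$ and gives a set $P_j^*$ of more than $n - d$ parties. Every honest party in $P_j^*$ outputs some bit from $\Pub(Q_j, d)$. If $g_j = 1$, then all honest members of $Q_j$ output $b_j$ from $\C_\str^j$, so $b_j$ is their common input to $\Pub(Q_j, d)$, and every honest party in $P_j^*$ outputs $b_j$. Hence only honest parties outside $P_j^*$ can mishear $Q_j$, so fewer than $d$ honest parties mishear each good committee. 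Summing over the at most $q$ good committees, the total number of mishearing incidents is less than $qd$.

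\textbf{Step 2: counting parties with many incidents.} Each honest party not in $H^*$ accounts for at least $\frac{z'\sqrt{q}}{12}$ incidents. Therefore the number of honest parties outside $H^*$ is less than
\[
\frac{qd}{z'\sqrt{q}/12} = \frac{12\sqrt{q}\,d}{z'}.
\]
Substituting $d = \ceil{\frac{z'n(1-3\alpha+3\varepsilon)}{36\sqrt{q}}}$ gives roughly
\[
\frac{n(1-3\alpha+3\varepsilon)}{3} + \frac{12\sqrt{q}}{z'}.
\]
The honest parties number at least $n - (\alpha-\varepsilon)n$. So
\[
|H^*| > n - (\alpha - \varepsilon)n - \frac{n}{3} + (\alpha-\varepsilon)n - O\!\left(\frac{\sqrt{q}}{z'}\right) = \frac{2n}{3} - O\!\left(\frac{\sqrt{q}}{z'}\right).
\]

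\textbf{Main obstacle: the rounding slack.} The $+1$ from the ceiling in $d$ contributes an additive $\frac{12\sqrt{q}}{z'}$ loss. We must still reach $\floor{\frac{2n}{3}}+1$, and that slack is exactly what the stated constants must absorb. I would first try to use the strictness of the inequalities. The bounds "fewer than $d$" and "more than $n-d$" are strict, so the integer count is at most $d-1$, not $d$. This removes the ceiling's contribution: we get $\ceil{x} - 1 < x$ with $x = \frac{z'n(1-3\alpha+3\varepsilon)}{36\sqrt{q}}$.

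Redoing Step 2 with this, the number of honest parties outside $H^*$ is less than
\[
\frac{12\sqrt{q}}{z'}\cdot q^{0}\cdot x \cdot \frac{q}{\sqrt{q}\cdot\sqrt{q}} = \frac{n(1-3\alpha+3\varepsilon)}{3} = \frac{n}{3} - (\alpha-\varepsilon)n.
\]
Hence $|H^*| > n - (\alpha-\varepsilon)n - \frac{n}{3} + (\alpha-\varepsilon)n = \frac{2n}{3}$, and integrality gives $|H^*| \geq \floor{\frac{2n}{3}}+1$.

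One care point is the case where $\frac{z'\sqrt{q}}{12}$ is not an integer. Here I would only use the real-valued inequality "at least $\frac{z'\sqrt{q}}{12}$ incidents", which suffices. If the counting still falls short, I would bound the number of good committees by $q - (\text{bad})$ rather than $q$, though I expect this to be unnecessary.
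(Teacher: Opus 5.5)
Your proof is correct and takes essentially the paper's approach. The paper also uses $|P_j^*| \geq n-d+1$ to get fewer than $\frac{z'n(1-3\alpha+3\varepsilon)}{36\sqrt{q}}$ mishearers per good committee. It then runs the same averaging argument to conclude that fewer than $\frac{n(1-3\alpha+3\varepsilon)}{3}$ honest parties lie outside $H^*$, hence $|H^*| > \frac{2n}{3}$. Your first pass with the $O(\sqrt{q}/z')$ slack is unnecessary. Also state explicitly, as the paper does, that honest members of a good committee obtain $\Pub(Q_j,d)$ inputs because $\C_\str^j$ is live there; Lemma~\ref{publemma} needs this.
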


\begin{proof}
    Let $Q_i$ be a good committee. The honest parties in $Q_i$ all output from $\C_\str^i$ and hence all obtain $\Pub(Q_i, d)$ inputs, which means that for some set $P^*_i$ of at least $n - d + 1 > n - \frac{z'n(1 - 3\alpha + 3\varepsilon)}{36\sqrt{q}}$ parties, the honest parties in $P^*_i$ all output from $\Pub(Q_i, d)$. Furthermore, if $g_i = 1$, then the honest parties in $Q_i$ all output the bit $b_i$ from $\C_\str^i$ and subsequently let their $\Pub(Q_i, d)$ inputs be $b_i$, and this means that the honest parties in $P^*_i$ output $b_i$ from \linebreak $\Pub(Q_i, d)$. In short, every honest party $j \in [n]$ that mishears $Q_i$ is in $[n] \setminus P_i^*$, which is of size less than $\frac{z'n(1 - 3\alpha + 3\varepsilon)}{36\sqrt{q}}$.

    Since every committee is misheard by less than $\frac{z'n(1 - 3\alpha + 3\varepsilon)}{36\sqrt{q}}$ honest parties, the number of honest parties outside $H^*$ must be less than $\frac{n(1 - 3\alpha + 3\varepsilon)}{3}$. Otherwise, the average number of mishearings per committee would be at least $\frac{1}{q} \cdot \frac{z'\sqrt{q}}{12} \cdot \frac{n(1 - 3\alpha + 3\varepsilon)}{3} = \frac{z'n(1 - 3\alpha + 3\varepsilon)}{36\sqrt{q}}$, contradicting our previous observation that every committee is misheard less often than this. Finally, we have $|H^*| > n - t - \frac{n(1 - 3\alpha + 3\varepsilon)}{3} = \frac{2n}{3} - t + (\alpha - \varepsilon)n \geq \frac{2n}{3}$. \qedhere
\end{proof}

\begin{lemma} \label{hgood-lemma}
    There exists an event that occurs with at least $\delta^q(1 - z)$ probability such that if this event occurs, then there is some bit $b^*$ such that the parties in $H^*$ can only broadcast $b^*$ on Line \ref{line:bcast-majority}, and $b^*$ is uniformly random given that this event occurs.
\end{lemma}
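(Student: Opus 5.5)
The event I would use is $E = E_1 \cap E_2$. Here $E_1$ is the event that $g_i = 1$ for every $i \in [q]$. $E_2$ is the event that the committee bits are lopsided: $|X - \frac{q}{2}| \geq \sigma$, where $X = \sum_{i \in [q]} b_i$ and $\sigma = \ceil{\frac{5z'\sqrt{q}}{12}}$, or a nearby integer margin that I will fix when counting below. Taking $b^* = 1$ if $X > \frac{q}{2}$ and $b^* = 0$ otherwise, the bits $(b_i)$ and $(g_i)$ are independent. Moreover, flipping all $b_i$ preserves $E$ and swaps $b^*$, so $b^*$ is uniform given $E$. By Lemma \ref{binomlemma}, $\Pr[E_2] \geq 1 - \frac{8\sigma}{5\sqrt{q}}$, and independence gives $\Pr[E] \geq \delta^q(1 - \frac{8\sigma}{5\sqrt{q}})$. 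It remains to check that $\frac{8\sigma}{5\sqrt{q}} \leq z$. With $\sigma \approx \frac{5z'\sqrt{q}}{12}$ this is about $\frac{2z'}{3} + \frac{1.6}{\sqrt{q}}$ after accounting for the ceiling. The choice $z' = \max(\frac{z}{3}, z - 1.6q^{-0.5})$ is designed so that this is at most $z$; I would split into the two cases of the max to verify this arithmetic.

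Next, I would show that under $E$ every $i \in H^*$ that reaches Line \ref{line:bcast-majority} broadcasts $b^*$. Say $b^* = 1$ (the other case is symmetric, with $v_0 \geq v_1$ ties also favoring $b^* = 0$). At the moment $v_0 + v_1 = q - \floor{\frac{5z'\sqrt{q}}{12}}$, I would bound the counts committee by committee. Party $i$ has not yet heard from at most $\frac{5z'\sqrt{q}}{12}$ committees. There are fewer than $c \leq \frac{z'\sqrt{q}}{3} + 1$ bad committees, and $i$ mishears fewer than $\frac{z'\sqrt{q}}{12}$ committees. Every other good committee $Q_j$ with $g_j = 1$ that $i$ has heard contributes exactly $b_j$ to the correct counter. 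Thus $v_1$ is at least the number of committees with $b_j = 1$, minus those not heard, bad, or misheard. Likewise, $v_0$ is at most the number of committees with $b_j = 0$, plus the number of bad or misheard committees.

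The main obstacle is making the constants close. Using $X - (q - X) = 2X - q \geq 2\sigma$, I need $2\sigma$ to exceed the worst-case swing. That swing is (unheard) $+\, 2 \cdot$ (bad $+$ misheard) $\approx \frac{5z'\sqrt{q}}{12} + 2(\frac{z'\sqrt{q}}{3} + \frac{z'\sqrt{q}}{12}) = \frac{5z'\sqrt{q}}{4}$. This is too much if $\sigma \approx \frac{5z'\sqrt{q}}{12}$, so I would count more carefully. An unheard committee only removes from $v_1$, and a bad or misheard committee can move one vote from $v_1$ to $v_0$. Since a misheard committee that outputs nothing is also unheard, I would refine the count to: unheard ones (including misheard-silent and bad-silent) total exactly $q - (v_0 + v_1)$, and only heard bad or misheard committees can flip. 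The requirement then becomes $2X - q > $ unheard $+\, 2\cdot$(heard bad or misheard), with heard plus unheard bad or misheard committees bounded by $c - 1 + \frac{z'\sqrt{q}}{12}$. If the constants still do not close, I would enlarge $\sigma$ to the required margin, about $\frac{5z'\sqrt{q}}{8}$ so that $2\sigma > \frac{5z'\sqrt{q}}{4}$, and recheck the probability bound. That gives $\frac{8\sigma}{5\sqrt{q}} \approx z'$, which is still at most $z$ when $z' \leq z - 1.6q^{-0.5}$ absorbs the rounding; in the case $z' = \frac{z}{3}$ there is slack. I would settle the exact margin by this bookkeeping, which is where I expect the real difficulty to lie.
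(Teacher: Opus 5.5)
You have the same event, $b^*$ and symmetry argument as the paper, and your majority count closes once you commit to the margin $\frac{5z'\sqrt{q}}{8}$. Your ``refined'' count buys nothing. The number of unheard committees is fixed at $\lfloor\frac{5z'\sqrt{q}}{12}\rfloor$, and the adversary can make every bad or misheard committee heard with the wrong bit, so the worst-case swing is still $\frac{5z'\sqrt{q}}{4}$. The paper simply bounds $v_{1-b^*}$ by (bad) $+$ $|\{j : b_j = 1-b^*\}|$ $+$ (misheard) $< \frac{q}{2} - \frac{5z'\sqrt{q}}{24}$ and uses that $v_0+v_1$ is fixed.

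The genuine gap is the probability bound in the case $z' = \frac{z}{3} > z - 1.6q^{-0.5}$, where you claim ``there is slack.'' There is none. In this case $z < 2.4q^{-0.5}$, and $z$ can be arbitrarily small. Lemma~\ref{binomlemma} with any integer $\sigma \geq 1$ only gives the bound $\frac{8\sigma}{5\sqrt{q}} \geq 1.6q^{-0.5}$, which exceeds $z$ whenever $z < 1.6q^{-0.5}$.

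Worse, the event itself fails if you define it with an integer margin $\sigma \geq 1$, as you propose. Then $\Pr[|X - \frac{q}{2}| < \sigma] \geq \Pr[X \in \{\frac{q-1}{2}, \frac{q+1}{2}\}] = \Theta(q^{-1/2})$, so for small $z$ its probability drops below $\delta^q(1-z)$.

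The missing idea is the parity of $q$, which is why the paper makes $q = 2\lceil n^{2-2/k}\rceil + 1$ odd. Keep the real threshold $\frac{5z'\sqrt{q}}{8}$. In this case $z' < 0.8q^{-0.5}$, so the threshold is below $\frac{1}{2}$. Since $X - \frac{q}{2}$ is a half-integer, you get $\Pr[|X - \frac{q}{2}| < \frac{5z'\sqrt{q}}{8}] \leq \Pr[|X - \frac{q}{2}| < \frac{1}{2}] = 0 < z$.

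In the other case, $z' = z - 1.6q^{-0.5}$, your rounding argument is fine: $\frac{8\lceil 5z'\sqrt{q}/8\rceil}{5\sqrt{q}} < z' + 1.6q^{-0.5} = z$.
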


\begin{proof}
    Suppose $\bigwedge_{i=1}^q(g_i = 1)$ and $|X - \frac{q}{2}| \geq \frac{5z'\sqrt{q}}{8}$, where $X = \sum_{i = 1}^qb_i$. Let $b^*$ be the bit indicating if $X > \frac{q}{2}$, with $b^* = 1$ if $X > \frac{q}{2}$ and $b^* = 0$ otherwise. If an honest party $i$ outputs $1 - b^*$ from $\Pub(Q_j, d)$ for a committee $Q_j$, then either $Q_j$ is bad, or $b_j = 1 - b^*$, or $i$ has misheard $Q_j$. So, for every $i \in H^*$, we have $|\{j \in [q] : i \text{ output } 1 - b^* \text{ from } \mathit{Publish}(Q_j, d)\}| \leq |\{j \in [q] : Q_j \text{ is bad}\}| + |\{j \in [q] : b_j = 1 - b^*\}| + |\{j \in [q] : i \text{ misheard } Q_j\}| < \frac{z'\sqrt{q}}{3} + (\frac{q}{2} - \frac{5z'\sqrt{q}}{8}) + \frac{z'\sqrt{q}}{12} = \frac{q}{2} - \frac{5z'\sqrt{q}}{24}$. This means that if some party $i \in H^*$ outputs from $q - \floor{\frac{5z'\sqrt{q}}{12}}$ instances of $\mathit{Publish}$ and thus observes that $v_0 + v_1 = q - \floor{\frac{5z'\sqrt{q}}{12}}$ on Line \ref{line:live-condition}, then it does so with $v_{b^*} > v_{1-b^*}$ (since $v_{b^*} - v_{1-b^*} = q - \floor{\frac{5z'\sqrt{q}}{12}} - 2v_{1-b^*} > q - \floor{\frac{5z'\sqrt{q}}{12}} - 2(\frac{q}{2} - \frac{5z'\sqrt{q}}{24}) \geq 0$), and therefore it broadcasts $b^*$ on Line \ref{line:bcast-majority}.

    Next, we show that $\Pr[\bigwedge_{i=1}^q(g_i = 1) \land |X - \frac{q}{2}| \geq \frac{5z'\sqrt{q}}{8}] \geq \delta^q(1 - z)$. By the independence of $g_1,\dots,g_q$, we have $\Pr[\bigwedge_{i=1}^q(g_i = 1)|] = \delta^q$, independently of whether $|X - \frac{q}{2}| \geq \frac{5z'\sqrt{q}}{8}$ since the $g_i$ and $b_i$ vectors are independent. So, it only remains to show that $\Pr[|X - \frac{q}{2}| < \frac{5z'\sqrt{q}}{8}] \leq z$. Lemma \ref{binomlemma} gives us $\Pr[|X - \frac{q}{2}| < \frac{5z'\sqrt{q}}{8}] \leq \frac{8\ceil{(5z'\sqrt{q})/8}}{5\sqrt{q}} < \frac{8((5z'\sqrt{q})/8 + 1)}{5\sqrt{q}} = z' + 1.6q^{-0.5}$. If $z' = z - 1.6q^{-0.5}$, then we are done. If not, then because $z' = \max(\frac{z}{3}, z - 1.6q^{-0.5})$, we have $z' = \frac{z}{3} > z - 1.6q^{-0.5}$. This means that $z' < 0.8q^{-0.5}$, which gives us $\Pr[|X - \frac{q}{2}| < \frac{5z'\sqrt{q}}{8}] \leq \Pr[|X - \frac{q}{2}| < 0.5] = 0 < z$. The fact that $\Pr[|X - \frac{q}{2}| < 0.5] = 0$ follows from $q$ being odd.

    Finally, $b^*$ is uniformly random given $\bigwedge_{i=1}^q(g_i = 1) \land |X - \frac{q}{2}| \geq \frac{5z'\sqrt{q}}{8}$, by $X$'s symmetry and by the independence of $b_1,\dots,b_q$ from $g_1,\dots,g_q$. This concludes the proof.
\end{proof}

Now, we use Lemma \ref{hsize-lemma} and \ref{hgood-lemma} to lower bound $\mathit{T}(\C_\str, z, k, \varepsilon, \alpha)$'s fairness. Suppose there is some uniformly random bit $b^*$ such that the parties in $H^*$ can only broadcast $b^*$ on Line \ref{line:bcast-majority}. Then, if any honest party outputs from $\mathit{T}(\C_\str, z, k, \varepsilon, \alpha)$ on Line \ref{line:output-line} after receiving bits from $\floor{\frac{2n}{3}} + 1$ parties, it does so with only at most $n - |H^*| < \frac{n}{3}$ (a strict minority) of these bits being $1 - b^*$, which means that the party outputs $b^*$. Since Lemma \ref{hgood-lemma} says that there is such a uniformly random bit $b^*$ with probability at least $\delta^q(1 - z)$, with at least this probability the honest parties' outputs from $\mathit{T}(\C_\str, z, k, \varepsilon, \alpha)$ all equal some common uniformly random bit $b^*$, or in other words $\mathit{T}(\C_\str, z, k, \varepsilon, \alpha)$ is $(\delta^q(1 - z))$-fair as Theorem \ref{fullthm} demands.

Next, we prove $\mathit{T}(\C_\str, z, k, \varepsilon, \alpha)$'s liveness and bound its latency. Let us scale all event timestamps so that the adversary takes at most $1$ unit of time to deliver any message. Let $T$ be any time such that by the time $T$, the honest parties in each good committee $Q_i$ output from $\C_\str^i$ and thus begin publishing their $\C_\str^i$ outputs via $\Pub(Q_i, d)$. Following this, because $\Pub$'s latency $4$, by the time $T + 4$ every honest party outputs from $\Pub(Q_i, d)$ for every good committee $Q_i$ which it does not mishear. The parties in $H^*$ who mishear less than $\frac{z'\sqrt{q}}{12}$ committees output from more than $q - c + 1 - \frac{z'\sqrt{q}}{12} > q - \frac{z'\sqrt{q}}{3} - \frac{z'\sqrt{q}}{12} = q - \frac{5z'\sqrt{q}}{12}$ $\Pub$ instances by the time $T + 4$, which is enough for them to all broadcast bits on Line \ref{line:bcast-majority}. Finally, the parties in $H^*$ broadcasting bits ensures that every honest party $i \in [n]$ receives $|H^*| \geq \floor{\frac{2n}{3}} + 1$ bits by the time $T + 5$, which is enough for $i$ to output from $\mathit{T}(\C_\str, z, k, \varepsilon, \alpha)$ on Line \ref{line:output-line}. In conclusion, we observe every honest party outputs from $\mathit{T}(\C_\str, z, k, \varepsilon, \alpha)$ in at most $5$ time units after every honest member of every good committee $Q_i$ outputs from $\C_\str^i$. This means that $\mathit{T}(\C_\str, z, k, \varepsilon, \alpha)$ achieve liveness, and that if $\C_\str$'s latency is bounded by $R$ if $s$ parties run it (which entails the latency bound $R$ on $\C_\str^i$ for every good committee $Q_i$), \linebreak then $\mathit{T}(\C_\str, z, k, \varepsilon, \alpha)$'s latency is bounded by $R + 5$.

Finally, let us consider the message and communication complexities. Suppose as Theorem \ref{fullthm} stipulates that $\C_\str$ costs at most $M(s)$ messages of size at most $L(s)$ when $s$ parties run it. In $\mathit{T}(\C_\str, z, k, \varepsilon, \alpha)$, we have $q$ instances of $\C_\str$ (each executed by $s$ parties), $q$ $\Pub$ instances $\Pub(Q_1,d), \dots, \Pub(Q_q,d)$, and finally the bit broadcasts on Line \ref{line:bcast-majority}. Each instance of $\C_\str$ costs at most $M(s)$ messages of size $L(s) + O(\log q)$, where the $O(\log q) = O(\log n)$ term is due to the protocol ID tags that must be used so that the parties are able to tell to which $\C_\str$ instance each message belongs. So, the $\C_\str$ instances cost at most $qM(s)$ messages of size $L(s) + O(\log n)$ in total. Then, each instance of $\Pub$ costs $O(s^2 + n(\frac{s}{d} + \log n))$ messages, of size $O(\log q) = O(\log n)$ where again this $O(\log n)$ term is due to the protocol ID tags. This means that the $q$ $\Pub$ instances cost $O(qs^2 + qn(\frac{s}{d} + \log n)) = O(n^{2-2/k}s^2 + n^{3-2/k}(\frac{s\sqrt{q}}{z'n(1-3\alpha + 3\varepsilon)} + \log n)) = O(\frac{sn^{3 - 3/k}}{z(1 - 3\alpha + 3\varepsilon)} + n^{2-2/k}(s^2 + n\log n))$ messages of size $O(\log n)$ in total. Finally, each party might broadcast a bit on Line \ref{line:bcast-majority}, and these broadcasts cost $O(n^2)$ \linebreak constant-size messages in total. Summing everything up, in $\mathit{T}(\C_\str, z, k , \varepsilon, \alpha)$ there are in total $qM(s) = O(n^{2-2/k}M(s))$ messages of size $L(s) + O(\log n)$ plus $O(\frac{sn^{3 - 3/k}}{z(1 - 3\alpha + 3\varepsilon)} + n^{2-2/k}(s^2 + n\log n))$ further messages of size $O(\log n)$, as Theorem \ref{fullthm} demands.
\end{proof}

\subparagraph*{A note on liveness.} To keep things simple, we have only considered common coin protocols with guaranteed liveness, where the honest parties never fail to output. However, we could relax this guarantee by only requiring the honest parties to all output with probability at least $\lambda$, and by calling a common coin $\lambda$-live if it achieves this guarantee. Such a probabilistic guarantee is for instance considered in \cite{cr93}. Our transformation could then support $\lambda$-live coins as follows: If $\C_\str$ is $\lambda$-live, then $\mathit{T}(\C_\str, z, k, \varepsilon, \alpha)$ is $\lambda^q$-live. This is because with at least $\lambda^q$ probability no good committee's strong coin instance would stall (by the independence of the liveness events), and $\mathit{T}(\C_\str, z, k, \varepsilon, \alpha)$ would thus ensure that every honest party outputs.

\section{Using the Transformation} \label{strong-coin-chapter}

While our main contribution is our transformation, we would also like to get some concrete results out of it. To use the transformation, we need suitable strong common coins, which we get by using the Monte Carlo common coin of \cite{fkt22} as a framework. This framework is based on AVSS, a well-known primitive invented for common coin tossing in \cite{cr93} where a dealer party shares a secret value with the other parties in a way that prevents the adversary from learning the secret before the honest parties run a reconstruction protocol to learn it, with some liveness and consistency guarantees that must hold even when the dealer is corrupted. The framework can tolerate up to $t < \frac{n}{3}$ faults, which is optimal for asynchronous byzantine agreement (even though asynchronous coin tossing is possible against $t < \frac{n}{2}$ faults \cite{cks00}). For any $\delta \in (0, 1)$, it produces a $\delta$-fair binary coin using the following ingredients: \begin{itemize}
    \item $n$ parallel instances of AVSS,\footnote{To be precise, what one needs is an AwVSS scheme, as defined in \cite{hashrand}. AwVSS is a subtly weaker variant of AVSS which \cite{hashrand} defines and uses for Monte Carlo coin tossing (in a way that supports amortizing the cost of tossing many coins).} one for each party to share a secret of size $O(\log \frac{1}{1 - \delta})$;
    \item one instance of binding gather;\footnote{Binding gather is a communication primitive that \cite{binding-gather} invented for distributed key generation.}
    \item $n$ parallel instances of approximate agreement in $[0, 1]$, each with the error $\varepsilon = O(\frac{1 - \delta}{n})$.\footnote{Approximate agreement in $[0, 1]$ is a variant of byzantine agreement where the honest parties have inputs in $[0, 1]$, and they must obtain outputs which 1) differ by at most some error parameter $\varepsilon > 0$ from each \linebreak other, and 2) are in between the minimum and maximum honest inputs \cite{approx-1986}.}
\end{itemize}

The Monte Carlo common coin in \cite{fkt22} uses specific AVSS, binding gather and approximate agreement protocols. However, since these protocols are used in a black-box way, we can replace them with alternatives. This is what we mean by using the coin as a framework. 

For binding gather, we use the protocol from \cite{aaps24}. It costs $O(n^3)$ messages of size $O(\log n)$ and has a constant latency. It tolerates $t < \frac{n}{3}$ faults with perfect security.

For approximate agreement, we use the protocol from \cite{mw25}, which also tolerates $t < \frac{n}{3}$ faults with perfect security. The cost for $n$ parallel instances of this protocol is $O(n^3\log\frac{1}{\varepsilon}) = O(n^3\log\frac{n}{1 - \delta})$ messages of size $O(\log n + \log\log\frac{1}{\varepsilon}) = O(\log n + \log\log\frac{n}{1-\delta})$ and $O(\log \frac{1}{\varepsilon}) = O(\log\frac{n}{1-\delta})$ latency.\footnote{In \cite{fkt22}, the authors bundle the $n$ approximate agreement instances together to keep the communication cubic in $n$. With the approximate agreement protocol from \cite{mw25}, bundling is no longer needed.}

The tricky ingredient is AVSS. Since AVSS involves input secrecy, it is impossible in the full-information setting, unlike binding gather and approximate agreement. So, we limit the adversary by assuming secure channels. To get $\C_\mathsf{perfect}$ we use a perfectly secure AVSS scheme that only tolerates $t < \frac{n}{4}$ faults, while to get $\C_\mathsf{crypto}$ we use a more efficient cryptographically secure scheme that tolerates $t < \frac{n}{3}$ faults.

As $\mathit{T}(\C_\str, z, k, \varepsilon, \alpha)$ is $(\delta^q(1 - z))$-fair (where $q = 2\ceil{n^{2-2/k}} + 1$) when $\C_\str$ is $\delta$-fair and $\delta^q(1 - z) \geq 1 - (1 - \delta)q - z$, for any $\delta' \in (0, 1)$ we can ensure that $\mathit{T}(\C_\str, z, k, \varepsilon, \alpha)$ is $\delta'$-fair by letting $z = \frac{1 - \delta'}{2}$ and letting $\delta = 1 - \frac{1 - \delta'}{2q}$. The case we care about is when $\delta' < 1 - \frac{1}{n}$. If $\delta'$ were any closer to $1$, then we would have committees of size $n$ in $\mathit{T}(\C_\str, z, k, \varepsilon, \alpha)$, and there would thus be no reason to use $\mathit{T}(\C_\str, z, k, \varepsilon, \alpha)$ instead of $\C_\str$.

\subparagraph*{Perfect Security.} To obtain the perfectly secure $\C_\mathsf{perfect}$, we use the perfectly secure AVSS scheme from \cite{aaps24} that tolerates $t < \frac{n}{4}$ faults, costs $O(n^3)$ messages of size $O(\log n)$ and has a constant latency. For any $\delta' < 1 - \frac{1}{n}$, we can use the Monte Carlo framework with this AVSS scheme to obtain a perfectly secure $(1 - \frac{1 - \delta'}{4\ceil{n^{1.5}} + 2})$-fair common coin $\C_\str^\mathsf{perfect}$ which tolerates $t < \frac{n}{4}$ faults, costs $O(n^4)$ messages of size $O(\log n)$ and has $O(\log n)$ latency. For $z = \frac{1 - \delta'}{2}$, $k = 4$, $\alpha = \frac{1}{4}$ and any $\varepsilon > 0$, the transformed common coin $\C_\mathsf{perfect} = \mathit{T}(\C_\str^\mathsf{perfect}, z, k, \varepsilon, \alpha)$ is a $\delta'$-fair binary common coin which tolerates $t \leq (\frac{1}{4} - \varepsilon)n$ faults, has $O(\log n)$ latency, and costs $O(n^{2.5}(\varepsilon^{-8}(1 - \delta')^{-4} + \log n))$ messages of size $O(\log n)$. Out of these messages, one can attribute $O(n^{1.5}s^4) = O(z^{-4}\varepsilon^{-8}n^{2.5})$ to the $O(n^{1.5})$ committees of size $s = O(z^{-1}\varepsilon^{-2}n^{0.25})$ each generating random bits, and $O(sn^{2.25} + n^{1.5}(s^2 + n\log n)) = O(z^{-1}\varepsilon^{-2}n^{2.5} + z^{-2}\varepsilon^{-4}n^2 + n^{2.5}\log n)$ to the rest of the protocol.

\subparagraph*{Cryptographic Security.} For cryptographic security, we use the AVSS (or more precisely AwVSS) scheme from \cite{hashrand} which tolerates $t < \frac{n}{3}$ faults, costs $O(n^2)$ messages of size $O(\kappa)$ (where $\kappa = \Omega(\log n)$ is a cryptographic security parameter) and has a constant latency. Then, given any $\delta' < 1 - \frac{1}{n}$, we can use the Monte Carlo framework with this scheme to obtain a $(1 - \frac{1 - \delta'}{4\ceil{n^{4/3}} + 2})$-fair coin $\C_\str^\mathsf{crypto}$ which tolerates $t < \frac{n}{3}$ faults, costs $O(n^3\log n)$ messages of size $O(\kappa)$ and has $O(\log n)$ latency. For $z = \frac{1 - \delta'}{2}$, $k = 3$, $\alpha = \frac{1}{3}$ and any $\varepsilon > 0$, the transformed common coin $\C_\mathsf{crypto} = \mathit{T}(\C_\str^\mathsf{crypto}, z, k, \varepsilon, \alpha)$ is a $\delta'$-fair binary common coin which tolerates $t \leq (\frac{1}{3} - \varepsilon)n$ faults, has $O(\log n)$ latency, and costs $O(n^{7/3}(1 - \delta')^{-3}\varepsilon^{-6}\log n)$ messages of size at most $O(\kappa)$. Out of these messages, $O(n^{4/3}s^3\log s) = O(n^{7/3}\varepsilon^{-6}z^{-3}n^{7/3}\log n)$ are due to the $O(n^{4/3})$ committees of size $s = O(\min(n, z^{-1}\varepsilon^{-2}n^{1/3}))$ each generating random bits, and this dominates the $O(\frac{sn^2}{3\varepsilon} + n^{4/3}(s^2 + n\log n))$ message complexity of the rest of the protocol. \linebreak Note that the AwVSS scheme's liveness properties hold unconditionally; therefore, the strong coin $\C_\str^\mathsf{crypto}$ and thus $\C_\mathsf{crypto}$ also achieve liveness unconditionally. However, the coins' fairness guarantees rely on the cryptographic security of the AwVSS scheme.

\subparagraph*{Multivalued Coins.} For any $\delta$ and $\ell$, one can obtain a $\delta$-good $\ell$-bit coin toss by taking a $(1 - \frac{1 - \delta}{\ell})$-fair binary coin, tossing it $\ell$ times in parallel, and concatenating the bits into a single $\ell$-bit coin output. By the union bound, with probability at least $\delta$ the bits will be i.i.d.\ uniformly random, and thus their concatenation will be uniformly random. This strategy is compatible with $\C_\mathsf{perfect}$ and $\C_\mathsf{crypto}$ as they have adjustable fairness parameters. Furthermore, if $\ell = O(\mathrm{polylog}(n))$, then the communication complexity of obtaining a $\delta$-fair $\ell$-bit coin toss this way using $\C_\mathsf{perfect}$ or $\C_\mathsf{crypto}$ is $O(\mathrm{polylog}(n))$ times that of using $\C_\mathsf{perfect}$ or $\C_\mathsf{crypto}$ as $\delta$-fair binary common coins, while the latency only goes up by $O(\log \ell) = O(\log \log n)$ as the strong coin's latency in this construction is $O(\log \frac{n}{1 - (1 - (1 - \delta)/\ell)}) = O(\log \frac{\ell n}{1 - \delta})$. In particular, one can use $\C_\mathsf{perfect}$ or $\C_\mathsf{crypto}$ for leader election with $O(\mathrm{polylog}(n))$ times the communication complexity of tossing a binary coin with the same fairness, which can be of use if a protocol calls for leader election rather than a binary coin (such as \cite{ddlmrs24} does for core set agreement). For this, the parties can toss an $\ell$-bit coin for some $\ell \gg \log_2 n$, interpret the output as a number in $\{0, \dots, 2^\ell - 1\}$, and let the leader's ID be this number modulo $n$ plus $1$.

\bibliography{refs}

\appendix

\section{Skipped Proofs}

\sparselemma

\begin{proof}
    When $d > n$, the lemma trivially holds since the partite set $V$ of size $n$ is too small for it to fail. Hence, we assume for the rest of the proof that $d \leq n$. In addition, we assume $\Delta < \frac{2s}{3}$, which implies $\Delta = \ceil{30(\frac{s\ln 2}{d} + \ln n)}$. If this were false, then the lemma would hold simply because every vertex $v \in V$ would have at most $|B| < \frac{s}{3} \leq \frac{\Delta}{2}$ neighbors in $B$.
    
    We prove the lemma using the probabilistic method. Let us construct a random bipartite graph by taking two vertex sets $Q$ and $V$, respectively of size $s$ and $n$, and by sampling $\Delta$ uniformly random neighbors in $Q$ without replacement independently for each vertex $v \in V$. This graph witnesses the lemma's truth iff one cannot find a counterexample set $B \subseteq Q$ of size less than $\frac{s}{3}$ such that $d$ vertices in $V$ have at least $\frac{\Delta}{2}$ neighbors in $V$, and we can assume $|B| = \ceil{\frac{s}{3}} - 1$ without loss of generality since letting $B$ be as big as possible cannot make it harder to find a counterexample. That is, no bad event $X_{B,D}$ should occur, where $X_{B,D}$ is the event that $B \subseteq Q$ is a set of size $b = \ceil{\frac{s}{3}} - 1$ and $D \subseteq V$ is a set of size $d$ such that every $v \in D$ has at least $\frac{\Delta}{2}$ neighbors in $B$. Below, we show for every possible choice of $B$ and $D$ that the probability of the bad event $X_{B,D}$ is less than $2^{-s-d\log_2 n}$. This is enough to prove the lemma since there are much fewer than $2^{s + d\log_2 n}$ ways to choose $B$ and $D$ ($\binom{s}{b} = 2^{s - \Theta(s)}$ for $B$ and $\binom{n}{d} \leq n^d = 2^{d\log_2 n}$ for $D$), which by the union bound means that with probability at least $1 - (2^{s - \Theta(s) + d\log_2 n})(2^{-s - d\log_2 n}) = 1 - 2^{-\Theta(s)}$ the bad event $X_{B,D}$ does not occur for \linebreak any $B$ and $D$.

    Let $B \subseteq Q$ be any set of size $b = \ceil{\frac{n}{3}} - 1$, and let $D \subseteq V$ be any set of size $d$. Suppose $b \geq 1$ without loss of generality, as the bad event $X_{B,D}$ is clearly impossible if $B$ is empty. Consider the random variable $Z = \sum_{v \in D}(d_B(v))$, where $d_B(v)$ for each $v \in D$ is the number of neighbors that $v$ has in $B$. For each $v \in V$, $d_B(v)$ is distributed identically to the random number of red balls one would get if one picked $\Delta$ uniformly random balls without replacement from an urn that contains $b$ red balls and $s - b$ blue balls. More formally, $d_b(v) \sim H(s, b, \Delta)$, where $H(s, b, \Delta)$ is the hypergeometric distribution with the parameters $s,b,\Delta$. It is known that the distribution $H(s, b, \Delta)$ is identical to the distribution of the sum of some collection of $\min(b, \Delta)$ independent (but except in degenerate cases not identically distributed) Bernoulli random variables \cite{hp14}, with the expected value $\frac{\Delta b}{s}$ by the linearity of expectation since each of the $\Delta$ picked balls would be red with probability $\frac{b}{s}$.
    
    Since $Z = \sum_{v \in D}(d_B(v))$ is the sum of $d$ i.i.d.\ random variables drawn from the distribution $H(s,b,\Delta)$ (with independence coming from the fact that each $v \in D$ has its neighbors picked independently), one can treat $Z$ as a sum of $d \cdot \min(b, \Delta)$ independent Bernoulli variables with $\mathbb{E}[Z] = \sum_{v \in D}\mathbb{E}(d_B(v)) = \sum_{v \in D}\frac{\Delta b}{s} = \frac{d \Delta b}{s}$, and use Chernoff bounds accordingly.

    Now, observe that if $X_{B, D}$ occurs, then we have $d_B(v) \geq \frac{\Delta}{2}$ for all $v \in D$, which implies $Z = \sum_{v \in D}(d_B(v)) \geq \frac{d \Delta}{2} = (1 + \frac{s - 2b}{2b})\mathbb{E}[Z]$. The Chernoff bound $\Pr[Z \geq (1 + \delta)\mathbb{E}[Z]] \leq \exp(\frac{-\mathbb{E}[Z] \delta^2}{2 + \delta})$ (found in many sources, such as \cite{chernoff}) gives us $-\ln(\Pr[Z \geq \frac{d\Delta}{2}]) = -\ln(\Pr[Z \geq (1 + \frac{s - 2b}{2b}) \mathbb{E}[Z]]) \geq \frac{d\Delta b((s - 2b)/(2b))^2}{s(2 + (s - 2b)/(2b))} = \frac{d\Delta b((s - 2b)/(2b))^2}{s(s + 2b)/(2b)} = \frac{d\Delta(s-2b)^2}{2s(s+2b)} > \frac{d\Delta(s/3)^2}{2s(5s/3)} = \frac{d\Delta}{30} = \frac{d\ceil{30((s\ln 2)/d + \ln n)}}{30} \geq s \ln 2 + d \ln n$. That is, the Chernoff bound ensures that $\Pr[Z \geq \frac{d\Delta}{2}] < e^{-s \ln 2 - d \ln n} = 2^{-s - d\log_2 n}$. Finally, because the bad event $X_{B, D}$ implies $Z \geq \frac{d\Delta}{2}$, we have $\Pr[X_{B,D}] \leq \Pr[Z \geq \frac{d\Delta}{2}] \leq 2^{-s - d\log_2 n}$, as desired.
\end{proof}

\comlemma

\begin{proof}
    If $\ceil{(\frac{2\alpha - \varepsilon}{\varepsilon^2})(\frac{n \ln 2}{c} + \ln q)} > n$, then $s = n$; so, it suffices to set $Q_1 = \dots = Q_q = [n]$, as this ensures that $|Q_i \cap B| = |B| \leq (\alpha - \varepsilon)n < \alpha s$ for every index $i \in [q]$ and every set $B \subseteq [n]$ of size at most $(\alpha - \varepsilon)n$. Hence, we assume below that $s = \ceil{(\frac{2\alpha - \varepsilon}{\varepsilon^2})(\frac{n \ln 2}{c} + \ln q)} \leq n$.

    We prove the lemma using the probabilistic method, like how we proved Lemma \ref*{sparselemma}. Let us construct each set $Q_i$ randomly by sampling $s$ uniformly random integers in $[n]$ without replacement. As in Lemma \ref*{sparselemma}'s proof, the sets $Q_1,\dots,Q_q$ witness the lemma's truth iff one cannot find a counterexample involving a maximal $B$. That is, the sets $Q_1,\dots,Q_q$ witness the lemma's truth iff no bad event $X_{B, C}$ occurs, where $X_{B, C}$ is the event that $B \subseteq [n]$ is a set of size $b = \floor{(\alpha - \varepsilon)n}$ and $C \subseteq [q]$ is a set of size $c$ such that $|Q_i \cap B| \geq \alpha s$ for all indices $i \in C$. Below, we show for every possible choice of $B$ and $C$ that the probability of the bad event $X_{B, C}$ is at most $2^{-n-c\log_2 q}$. This is enough as there are much fewer than $2^{n + c\log_2 q}$ ways to choose $B$ and $C$ ($\binom{n}{b} \leq \binom{n}{\ceil{n/3} - 1} = 2^{n - \Theta(n)}$ for $B$ and $\binom{q}{c} \leq q^c = 2^{c\log_2 q}$ for $C$), which means that with probability at least $1 - (2^{n - \Theta(n) + c\log_2 q})(2^{-n - c\log_2 q}) = 1 - 2^{-\Theta(n)}$ the bad event $X_{B, C}$ does not occur for any $B$ and $C$.
    
    Let $B \subseteq [n]$ be any set of size $b = \floor{(\alpha - \varepsilon)n}$, let $C \subseteq [q]$ be any set of size $c$, and let $Z = \sum_{i \in C}|Q_i \cap B|$. Suppose $b \geq 1$ without loss of generality, as the bad event $X_{B,C}$ is clearly impossible if $B$ is empty. Observe that $Z$ is the sum of $c$ independent random variables from the distribution $H(n, b, s)$ (the hypergeometric distribution with the parameters $n,b,s$), each distributed identically to the sum of some $\min(b, s)$ independent Bernoulli variables \cite{hp14}. So, $Z$ is distributed identically to the sum of some $c \cdot \min(b, s)$ independent Bernoulli variables, with $\mathbb{E}[Z] = \sum_{i \in C} \mathbb{E}[|Q_i \cap B|] = \sum_{i \in C}\frac{sb}{n} = \frac{csb}{n}$.

    Observe that if the bad event $X_{B, C}$ occurs, then $|Q_i \cap B| \geq \alpha s$ for all $i \in C$, which implies that $Z = \sum_{i \in C}|Q_i \cap B|\geq c\alpha s = (1 + \frac{\alpha n - b}{b})\mathbb{E}[Z]$. The Chernoff bound $\Pr[Z \geq (1 + \delta)\mathbb{E}[Z]] \leq \exp(\frac{-\mathbb{E}[Z] \delta^2}{2 + \delta})$ gives us $-\ln(\Pr[Z \geq c\alpha s]) = -\ln(\Pr[Z \geq (1 + \frac{\alpha n - b}{b})\mathbb{E}[Z]) \geq \frac{csb((\alpha n - b)/b)^2}{n(2 + (\alpha n - b)/b)} = \frac{csb((\alpha n - b)/b)^2}{n(\alpha n + b)/b} = \frac{cs(\alpha n - b)^2}{n(\alpha n + b)} \geq \frac{cs(\alpha n - (\alpha - \varepsilon)n)^2}{n(\alpha n + (\alpha - \varepsilon)n)} = \frac{cs\varepsilon^2}{2\alpha - \varepsilon} = \frac{c\ceil{(2\alpha - \varepsilon)/\varepsilon^2)((n\ln 2)/c + \ln q)}\varepsilon^2}{2\alpha - \varepsilon} \geq c(\frac{n \ln 2}{c} + \ln q) = n \ln 2 + c\ln q$. That is, we have $\Pr[Z \geq c\alpha s] \leq e^{-n\ln 2 - c\ln q} = 2^{-n - c\log_2 q}$. Finally, since the bad event $X_{B,C}$ implies $Z \geq c\alpha s$, we have $\Pr[X_{B, C}] \leq \Pr[Z \geq c\alpha s] \leq 2^{-n - c\log_2 q}$, as desired.
\end{proof}

\end{document}